\documentclass[11pt,letterpaper]{article}
\usepackage[letterpaper, margin=1in]{geometry}

\usepackage{graphicx} %
\usepackage{style}
\usepackage{algorithm}
\usepackage[noend]{algpseudocode}
\usepackage{mdframed}
\usepackage{ytableau}

\allowdisplaybreaks

\newcommand{\ulam}{\mathrm{\mathrm{S}}^{\mathrm{U}}}

\newcommand{\lcs}{\mathrm{LCS}}
\newcommand{\tr}{\mathrm{tr}}
\newcommand{\wvalue}{\nu}
\newcommand{\cycle}{c}
\newcommand{\dulam}{\mathrm{\mathrm{d}}^{\mathrm{U}}}
\newcommand{\dcayley}{\mathrm{\mathrm{d}}^{\mathrm{C}}}
\newcommand{\cayley}{\mathrm{\mathrm{S}}^{\mathrm{C}}}

\newcommand{\id}{\mathrm{id}}
\newcommand{\fix}{\mathrm{fix}}
\newcommand{\twocycle}{\mathrm{cycle_2}}

\newcommand{\bS}{\mathbb{S}}

\DeclareMathOperator{\Rec}{Rec}

\let\epsilon\varepsilon
\theoremstyle{definition} %
\newtheorem{example}[theorem]{Example}

\providecommand{\Comments}{1}
\ifnum\Comments=1
\usepackage[colorinlistoftodos,prependcaption,textsize=scriptsize]{todonotes}
\definecolor{Gred}{RGB}{219, 50, 54}
\definecolor{Ggreen}{RGB}{60, 186, 84}
\definecolor{Gblue}{RGB}{72, 133, 237}
\definecolor{Gyellow}{RGB}{204, 204, 0}
\definecolor{Gpurple}{RGB}{204, 0, 204}
\definecolor{Gorange}{RGB}{255, 200, 120}
\definecolor{Gbrown}{RGB}{0, 204, 204}
\paperwidth=\dimexpr \paperwidth %
\marginparwidth=\dimexpr\marginparwidth %
\else
\usepackage[disable]{todonotes}
\fi

\usepackage{colortbl}

\title{On the LSH Distortion of Ulam and Cayley Similarities}
\author{Flavio Chierichetti\thanks{Reddit, San Francisco. Work done in part while at Sapienza University of Rome. \texttt{flavio.chierichetti@reddit.com}} 
\hspace*{0.7cm}
Mirko Giacchini\thanks{Sapienza University of Rome. \texttt{\{giacchini, tani\}@di.uniroma1.it}}
\hspace*{0.7cm}
Ravi Kumar\thanks{Google, Mountain View.  \texttt{ravi.k53@gmail.com}}
\hspace*{0.7cm}
Erasmo Tani\footnotemark[2]
}
\date{}

\begin{document}
\maketitle
\begin{abstract}
Locality-sensitive hashing (LSH) has found widespread use as a fundamental primitive, particularly to accelerate nearest neighbor search. An LSH scheme for a similarity function $S:\mathcal{X} \times \mathcal{X} \to [0,1]$ is a distribution over hash functions on $\mathcal{X}$ with the property that the probability of collision of any two elements $x,y\in \mathcal{X}$ is exactly equal to $S(x,y)$. However, not all similarity functions admit exact LSH schemes. The notion of LSH distortion measures how multiplicatively close a similarity function is to having an LSH scheme. 

In this work, we study the LSH distortion of the Ulam and Cayley similarities, which are popular similarity measures on permutations of $n$ elements.  We show that the Ulam similarity admits a sublinear LSH distortion of $O(n / \sqrt{\log n})$; we also prove a lower bound of $\Omega(n^{0.12})$ on the best LSH distortion achievable. On the other hand, we show that the LSH distortion of the Cayley similarity is $\Theta(n)$.
\end{abstract}

\thispagestyle{empty}
\setcounter{page}{0}
\newpage

\section{Introduction}

Locality-sensitive hashing (LSH) is an important tool for processing
high dimensional data~\citep{b97, bcfm00, imrv97, im98}.  
Classic examples of LSH include minHash for the 
Jaccard similarity~\citep{b97, bcfm00}, simHash for the 
cosine similarity~\citep{c02}, and sampling hash for the Hamming
similarity~\citep{im98}.  LSH has been extensively used for solving 
(approximate) nearest-neighbor search~\citep{ai08} and similarity 
search~\citep{gim99, ljwcl07, sp12}; see \url{https://www.mit.edu/~andoni/LSH/} for more pointers on classical results.

LSH is a way to represent a similarity $S$---which in the context of this paper is a bi-variate, reflexive, symmetric function---by a distribution over partitions. These partitions correspond to the pre-images of hash functions, so that two entities are considered similar by a hash function if they both hash to the same value.  
In other words, an LSH family $\cH$ induces a 
\emph{collision kernel} of the form 
$S(x, y) = \Pr_{h \sim \cH} [h(x) = h(y)]$, for entities $x, y$ in the
universe.

Such collision kernels have to satisfy two constraints: 
they have unit diagonals (capturing reflexivity) and are a convex combination of equivalence-relation matrices (capturing hashing to the same value).  Hence, it turns out that many natural and useful similarity measures do not admit an LSH.  An interesting line of work studies which similarities have an LSH.  \citet{c02} identified a necessary condition for a similarity to admit an LSH: the corresponding distance function (which is often one minus the similarity) must be a metric and must be isometrically embeddable into $\ell_1$. 

To circumvent this,~\cite{ckpt19} introduced the notion of LSH distortion, which answers the question: how much is a non-LSHable similarity $S$ multiplicatively close to an LSHable similarity?

Formally, given a similarity $S$ that does not admit an LSH, is there another similarity $S'$ that has an LSH and satisfies:
\begin{equation}\label{eq:distortion-def}
    \forall x,y : \quad {1\over \Delta} \cdot S(x, y) \leq S'(x, y) \leq S(x, y)?
\end{equation}

We call the optimal $\Delta$ in \eqref{eq:distortion-def} the \emph{LSH distortion} of $S$.

In their paper, Chierichetti et al.\ showed tight LSH distortion bounds for several set similarities. Note that, since we are considering multiplicative approximations over \textbf{similarities}, and we wish to achieve them using collision kernels, the LSH distortion of a similarity can differ significantly from its metric counterpart, and known results for metric distortion do not transfer to this setting.

\paragraph{Our contributions.}
In this work, we study the LSH distortion of 
two natural similarities derived from popular metrics on permutations. The first is the normalized Ulam similarity, which measures the length of the longest common subsequence between two permutations. The second is the Cayley similarity, which is the normalized number of cycles in the relative permutation (its distance counterpart is number of transpositions needed to turn one permutation into another).

Our main results are the following:
\begin{itemize}
\item We show that the Ulam similarity has sublinear LSH distortion.  In particular, we show that it can be approximated by a similarity that has an LSH, with distortion $O(n / \sqrt{\log n})$ (\Cref{thm:ulam-upper-bound}),
\item We also show a lower bound of $\Omega(n^{0.12})$ on the LSH distortion of the Ulam similarity (\Cref{thm:ulam-lower-bound}),
\item In contrast, we show that the Cayley similarity has 
an $\Omega(n)$ lower bound on its
LSH distortion (\Cref{thm:cayley-lower-bound}), matching a simple $O(n)$ upper bound (\Cref{thm:cayley-upper-bound}).  
\end{itemize}

\paragraph{Techniques.}

The proof of the upper bound for the Ulam similarity is
based on computing a suitable measure of correlation between the given permutation and a random permutation. The collision probability of our scheme depends on how well the input permutations are correlated with a random one. Showing that the collision probability is upper bounded by $\lcs(\cdot,\cdot)/n$ is relatively easy. On the other hand, showing it is also lower bounded by the LCS (up to sublinear distortion) is more challenging.

For the lower bounds on the distortion, we crucially utilize 
the PSDness of the collision kernel (\Cref{lem:psd-Pa}).  The lower bound for the Ulam similarity is based on first constructing a small set of permutations on an 8-element
universe that have an LSH distortion of at least 9/7, and then amplifying its cardinality to work for a universe of arbitrarily large size, by using the wreath product on permutations.

The lower bound for the Cayley similarity is based on representation theory, and uses both the fact that every bi-invariant PSD kernel on $\mathbb{S}_n$ can be written as a convex combination of normalized characters, as well as Roichman’s character-ratio estimates applied to derangements.
The distortion lower bound is obtained by constructing 
two derangements that share the high-order terms of this decomposition and hence must have close collision probabilities with the identity permutation, but have vastly different Cayley similarities (linear vs constant) to it.

\paragraph{Related work.}

For permutation metrics, the Ulam metric has been extensively studied through its connection to edit distance and longest common subsequences.  \cite{ck06} proved that the Ulam metric embeds into $\ell_1$ with $O(\log n)$ distortion;~\citet{ak10} later proved that embeddings of the Ulam metric into $\ell_1$ requires a near-tight 
distortion of $\Omega(\log n/\log\log n)$.  
\citet{aik09} bypassed the $\ell_1$ barrier by embedding Ulam distance with constant distortion into an iterated product metric. \citet{cgkk18} obtained an $O(\log n)$ distortion randomized embedding of Ulam into Hamming space that also carries alignment information.  Note that all these previous results are for the Ulam metric; to the best of our knowledge, the corresponding question of Ulam similarity has not  been considered in the literature. 

Unlike Ulam, the Cayley metric is not hard to embed into $\ell_1$. Our work can be viewed as a hardness result for LSHability of the Cayley similarity. 

\paragraph{Outline.} We begin, in \Cref{sec:notation}, by introducing some basic notation. In \Cref{sec:lsh-intro}, we review the definition of LSH distortion and introduce in detail the central problems of the paper. In \Cref{sec:ulam-ub}, we provide our $O(n / \sqrt{\log n})$ upper bound on the LSH distortion for the Ulam similarity. We then give a $\Omega(n^{0.12})$ lower bound in \Cref{sec:ulam-lower-bound}. In \Cref{sec:cayley-upper-bound}, we give a simple $O(n)$ upper bound on the distortion of the Cayley similarity, that we complement with a matching lower bound in \Cref{sec:cayley-lower-bound}.

In \Cref{app:background-representation-theory} we also review some background on the representation theory of $\mathbb{S}_n$, which we rely on for our proofs in \Cref{sec:cayley-lower-bound}.

\section{Notation}\label{sec:notation}

\paragraph{Permutations.} For a positive integer $n$, let $[n]:=\{1,\dots, n\}$. Let $\mathbb{S}_n$ be the set of all permutations of $[n]$. We denote by $\id$ the identity permutation. For a permutation $\pi \in \mathbb{S}_n$ we will sometimes use the one-line notation and sometimes the cycle decomposition.\footnote{For example, for the permutation $\pi:[5]\rightarrow[5]$ such that $\pi(1)=3$, $\pi(2)=5$, $\pi(3)=1$, $\pi(4)=2$, $\pi(5)=4$; the one-line representation is $(3,5,1,2,4)$ and the cycle decomposition is $(1\,\, 3)(2\,\,5\,\,4)$.} For a permutation $\pi\in \mathbb{S}_n$ we denote with $c(\pi)$ the number of cycles in its cycle decomposition. Note that $\cycle(\pi)=\cycle(\pi^{-1})=\cycle(\sigma \pi \sigma^{-1})$ for any $\sigma\in \mathbb{S}_n$. For $\pi\in \mathbb{S}_n$, $a,b\in [n]$ we write $a <_\pi b$ to indicate that $a$ comes before $b$ in the one-line representation of $\pi$ (alternatively that $\pi^{-1}(a) < \pi^{-1}(b))$. We write $a \leq_\pi b$ to indicate that either $a=b$ or $a<_\pi b$. For $\pi,\sigma\in\mathbb{S}_n$ we denote with $\lcs(\pi,\sigma)$ the length of the longest common subsequence between the one-line representations of $\pi$ and $\sigma$. For $\pi\in\mathbb{S}_n$, we denote with $\fix(\pi)$ the number of fixed points of $\pi$ and with $\twocycle(\pi)$ the number of cycles of length 2 in the cycle decomposition of $\pi$. We also let $\supp(\pi)=n-\fix(\pi)$.

\paragraph{Partitions and Compositions.} Recall that, for any positive integer $n$, a partition $\lambda =(\lambda_1, \ldots, \lambda_r)$ of $n$ is a finite sequence of non-increasing positive integers that sum to $n$. We write $\lambda \vdash n$ to indicate that $\lambda$ is a partition of $n$. We denote by $(1^n)$ the partition $(1,\ldots ,1)$ containing $n$ $1$s. Given a partition $\lambda =(\lambda_1, \ldots, \lambda_r) \vdash n$ its \emph{transpose partition} (sometimes also called the \emph{conjugate partition}) is the partition $\lambda '\vdash n$ defined by $\lambda'=(\lambda'_1, \ldots , \lambda'_{\lambda_1})$, where:
\[
    \forall i \in [\lambda_1] : \lambda'_i = \max\{k \mid \lambda_k \ge i\}.
\]
 A \emph{composition} of $n$ is an ordered partition of $n$, i.e.\ an integer vector $\mu = (\mu_1,\dots, \mu_\ell)$ such that $\mu_i \ge 0$ for every $i\in[\ell]$ and  $\sum_{i\in[\ell]} \mu_i =n$.

\paragraph{Matrices.} A symmetric matrix $M \in \mathbb{R}^{n \times n}$ is positive semidefinite (PSD, write $M\succeq 0$) if for all $x\in \mathbb{R}^n$, we have:
\[
    x^TMx\ge 0.
\]
The trace operator is the function that takes in a square matrix $M \in \mathbb{R}^{n\times n}$ and returns $\tr (M) = \sum_{i\in[n]} M(i,i)$. Given two square matrices $X,Y$, we denote by $\operatorname{diag}(X,Y)$ the block matrix:
\[
    \operatorname{diag}(X,Y) = \begin{pmatrix}
        X & \mathbf{0}\\
        \mathbf{0} & Y
    \end{pmatrix}
\]
where $\mathbf{0}$ is the all-zeros matrix.

\section{Similarities, LSH, and Distortion}\label{sec:lsh-intro}
\begin{definition}[Similarity]
Given a set $\mathcal{X}$, a \emph{similarity} (or \emph{similarity function}) on $\mathcal{X}$ is a function $S:\mathcal{X}\times \mathcal{X} \to [0,1] $, satisfying:
\begin{description}
    \item[(a) Reflexivity:] $S(x,x) = 1$ for all $x\in \mathcal{X}$,
    \item[(b) Symmetry:] $S(x,y) = S(y,x)$ for all $x,y\in \mathcal{X}$.
\end{description}
A \emph{similarity space} is simply a pair $(\mathcal{X},S)$, where $\mathcal{X}$ is a set, and $S$ is a similarity function on $\mathcal{X}$.
\end{definition}

Given a similarity space, it is natural to ask whether one can find a small representation for elements of $\mathcal{X}$ that preserve the values of $S(\cdot,\cdot)$. A natural approach to obtaining this is locality-sensitive hashing, which aims to construct randomized embeddings of $\mathcal{X}$ into discrete ($0$-$1$) similarity spaces that preserve $S(\cdot,\cdot)$ in expectation. We recall the key definition here.

\begin{definition}[Locality-Sensitive Hashing (LSH)]
    A locality-sensitive hashing (LSH) scheme for a similarity space $(\mathcal{X},S)$ is a distribution $\mathcal{H}$ over functions from $\mathcal{X}$ into some finite set $\mathcal{Y}$ with the property that:
    \[
        \Pr_{h\sim \mathcal{H}}\left[h(x) = h(y)\right] = S(x,y),
    \]
    for all $x,y \in \mathcal{X}$.
\end{definition}
We say that a similarity function $S$ is \emph{LSHable} if it admits an  LSH scheme. 

\cite{ckpt19} give a notion of distortion for similarity functions, which quantifies the degree to which non-LSHable similarities are far from being LSHable.  We recall the key definitions.
\begin{definition}[\cite{ckpt19}]\label{def:LSH-distortion}
    We say a distribution $\cH$ over functions from $\mathcal{X}$ to some finite set $\mathcal{Y}$ $\Delta$\emph{-approximates} the space $(\mathcal{X}, S)$, if:
    \[
        {1\over \Delta} \cdot S(x,y)\le \Pr_{h\sim \mathcal{H}}\left[h(x) = h(y) \right] \le S(x,y),
    \]
    for all $x,y \in \mathcal{X}$.
    The LSH \emph{distortion} (or simply distortion) of a similarity space $(\mathcal{X},\mathcal{S})$ is the smallest $\Delta \ge1 $ such that there exists a distribution $\mathcal{H}$ that $\Delta$-approximates $(\mathcal{X}, S)$.
\end{definition}

A simple argument shows that the distortion of a similarity space $(\mathcal{X},S)$ could be characterized as the smallest $\Delta \ge1 $ such that there exists an LSHable similarity function $S':\mathcal{X} \times \mathcal{X}\to[0,1]$ with the property that:
\begin{equation}\label{eq:similarity-distorts}
    {1\over \Delta} \cdot S(x,y) \le S'(x,y)\le S(x,y),
\end{equation}
for all $x,y\in \mathcal{X}$. %

The focus of this work is the distortion of two similarity spaces in which $\mathcal{X}$ is the symmetric group $\mathbb{S}_n$: the Ulam similarity and the Cayley similarity.

\paragraph{Ulam Similarity.} The \emph{Ulam distance} between two permutations $\pi,\sigma\in\mathbb{S}_n$ is defined as 
\[
    \dulam(\pi,\sigma):=n-\lcs(\pi,\sigma),
\]
We define the (normalized) \emph{Ulam similarity} in a natural way:
\[
    \ulam(\pi,\sigma)= 1- \frac{n-\lcs(\pi,\sigma)}{n}=\frac{\lcs(\pi,\sigma)}{n}.
\]

\paragraph{Cayley Similarity.} 

The \emph{Cayley distance} between two permutations $\pi, \sigma\in \mathbb{S}_n$ is defined as the minimum number of (not necessarily adjacent) transpositions required to transform one permutation into the other, and it is also equivalent to $\dcayley(\pi,\sigma):=n - \cycle(\pi^{-1} \sigma)=n - \cycle(\pi \sigma^{-1})$ (see, e.g., \citep[page 118]{diaconis88}). We define a notion of (normalized) \emph{Cayley similarity}
in a natural way:
\[
    \cayley(\pi, \sigma)=1- \frac{\dcayley(\pi,\sigma)}{n}=\frac{\cycle(\pi^{-1}\sigma)}{n}.
\]

\paragraph{Positive Semi-Definiteness.} In order to prove lower bounds on the achievable distortion, we will make use of the following result. Let $\mathcal{H}$ be a probability distribution over functions from a finite domain $\mathcal{X}$ into a finite range $\mathcal{Y}$. Fixing any ordering of the elements of $\mathcal{X}$, we define matrix $P_{\cH}\in \mathbb{R}^{|\mathcal{X}|\times |\mathcal{X}|}$ such that:
\[
    P_{\cH}(x_1,x_2) := \Pr_{h\sim \mathcal{H}}\left[ h(x_1) = h(x_2)\right],
\]
for $x_1,x_2\in \mathcal{X}$.

We now show that this matrix is always PSD.
\begin{lemma}\label{lem:psd-Pa}
For any distribution $\cH$ over functions from $\mathcal{X}$ to $\mathcal{Y}$, where $\mathcal{X}$ and $\mathcal{Y}$ are finite sets, the matrix $P_{\cH}$ is PSD. That is, for all $\{w_x\}_{x\in \mathcal{X}} \in \mathbb{R}^{|\mathcal{X}|}$, we have: 
\[
    \sum_{x_1\in \mathcal{X}} \sum_{x_2\in \mathcal{X}} w_{x_1} w_{x_2} P_{\cH}(x_1, x_2) \ge 0.
\]
\end{lemma}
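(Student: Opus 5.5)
The plan is to write the collision matrix as an expectation of indicator matrices, each of which is a Gram matrix. First, expand the collision probability as an expectation of an indicator:
\[
    P_{\cH}(x_1,x_2)=\mathbb{E}_{h\sim\cH}\left[\mathbf{1}[h(x_1)=h(x_2)]\right].
\]
By linearity of expectation, the quadratic form becomes
\[
    \sum_{x_1,x_2} w_{x_1}w_{x_2}P_{\cH}(x_1,x_2)=\mathbb{E}_{h\sim\cH}\Big[\sum_{x_1,x_2} w_{x_1}w_{x_2}\mathbf{1}[h(x_1)=h(x_2)]\Big].
\]
This step is valid even if $\cH$ is supported on infinitely many functions, since the sums are finite and each integrand is bounded. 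It therefore suffices to show that the inner quantity is nonnegative for every fixed $h$.

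Next, fix $h:\mathcal{X}\to\mathcal{Y}$ and decompose the indicator over the finite range:
\[
    \mathbf{1}[h(x_1)=h(x_2)]=\sum_{y\in\mathcal{Y}}\mathbf{1}[h(x_1)=y]\,\mathbf{1}[h(x_2)=y].
\]
Substituting and exchanging the finite sums turns the inner quantity into
\[
    \sum_{y\in\mathcal{Y}}\Big(\sum_{x\in h^{-1}(y)} w_x\Big)^2\ge 0 .
\]
Equivalently, the matrix $\mathbf{1}[h(x_1)=h(x_2)]$ is the Gram matrix of the vectors $e_{h(x)}\in\mathbb{R}^{\mathcal{Y}}$, and hence PSD. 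Taking the expectation of nonnegative quantities gives a nonnegative result. Together with the obvious symmetry of $P_{\cH}$, this proves the lemma.

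No step here is a real obstacle. The only point needing care is the interchange of expectation and summation, which holds because all sums are over the finite sets $\mathcal{X}$ and $\mathcal{Y}$.
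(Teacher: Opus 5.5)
Your proof is correct and follows essentially the same route as the paper: write $P_{\cH}$ as an expectation of collision indicators, exchange the finite sums with the expectation, and for each fixed $h$ group terms by the fibres $h^{-1}(y)$ to obtain $\sum_{y}\big(\sum_{x\in h^{-1}(y)} w_x\big)^2 \ge 0$. (Your aside about $\cH$ possibly having infinite support is moot, since with $\mathcal{X}$ and $\mathcal{Y}$ finite there are only finitely many functions $\mathcal{X}\to\mathcal{Y}$, but it does no harm.)
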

\begin{proof}
Note that for any $x_1,x_2\in \mathcal{X}$,
\[
    P(x_1, x_2) 
    = \Pr_{h \sim \mathcal{H}} [h(x_1) = h(x_2)]
    = \E{h \sim \cH}{\mathbf{1}_{[h(x_1) = h(x_2)]} },
\]
where $\mathbf{1}_{[\cdot]}$ is the indicator function, and hence:
\[
  \sum_{x_1\in \mathcal{X}} \sum_{x_2\in \mathcal{X}} w_x w_y P(x_1, x_2) =
  \sum_{x_1\in \mathcal{X}} \sum_{x_2\in \mathcal{X}} w_x w_y \E{h \sim \cH}{\mathbf{1}_{[h(x_1) = h(x_2)]} }
  = \E{h \sim \cH}{ \sum_{x_1\in \mathcal{X}} \sum_{x_2\in \mathcal{X}} w_{x_1} w_{x_2} \mathbf{1}_{[h(x_1) = h(x_2)]} }.
\]
Now, for a fixed  $h:\mathcal{X}\rightarrow \mathcal{Y}$, we show that the inner sum is non-negative, which in turns completes the proof. We have:
\[
    \sum_{x_1\in \mathcal{X}} \sum_{x_2\in \mathcal{X}} w_{x_1} w_{x_2} \mathbf{1}_{[h(x_1) = h(x_2)]} = \sum_{y\in \mathcal{Y}} \sum_{\substack{x_1\in \mathcal{X} \, :\\h(x_1)=y}} \sum_{\substack{x_2\in \mathcal{X} \, :\\h(x_2)=y}} w_{x_1} w_{x_2} = \sum_{y\in \mathcal{Y}} \left(\sum_{\substack{x\in \mathcal{X} \, :\\h(x)=y}} w_{x}\right)^2 \geq 0. \qedhere
\]
\end{proof}

\section{Upper Bound for Ulam Similarity}\label{sec:ulam-ub}
In this section, we prove the following result:
\begin{theorem}\label{thm:ulam-upper-bound}
    The LSH distortion of $(\mathbb{S}_n, \ulam)$ is $O(n/\sqrt{\log n})$.
\end{theorem}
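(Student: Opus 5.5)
The plan is to build the scheme as a mixture of two simple hash families and to use the slack between $\lcs(\pi,\sigma)/n$ and the trivial value $1/n$. Write $L=\lcs(\pi,\sigma)$ and $D=n/\sqrt{\log n}$. For $\pi\neq\sigma$ we always have $L\ge 1$. The target
\[
\frac{L}{Dn}\le \Pr[h(\pi)=h(\sigma)]\le \frac{L}{n}
\]
is already met by the \emph{trivial scheme}: with probability $1/n$ map everything to one value, otherwise map every permutation to a distinct value. This gives collision probability exactly $1/n$, which satisfies the target whenever $L\le n/\sqrt{\log n}$. So all the work is for pairs with $L\ge n/\sqrt{\log n}$. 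For these we need collision probability at least $\Omega(\sqrt{\log n}\, L/n^2)$, while never exceeding $L/n$ on any pair, including pairs with $L=O(1)$.

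\textbf{Second family (correlation with a random permutation).} Sample a uniformly random $\rho\in\mathbb{S}_n$ and a uniformly random $k$-subset $T\subseteq[n]$, where $k$ is a parameter. Hash $\pi$ to a common token $\star$ if the one-line order of $\pi$ restricted to $T$ equals that of $\rho$, and to a unique token otherwise. Then $\pi$ and $\sigma$ collide iff $\pi|_T=\sigma|_T=\rho|_T$, so
\[
\Pr[\text{collision}]=\frac{1}{k!}\cdot\frac{N_k(\pi,\sigma)}{\binom nk},
\]
where $N_k(\pi,\sigma)$ is the number of $k$-sets on which $\pi$ and $\sigma$ induce the same order. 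Such a set is exactly a $k$-chain in the 2-dimensional poset defined by $a\prec b$ iff $a<_\pi b$ and $a<_\sigma b$, and the longest chain in this poset has length $L$.

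\textbf{Upper bound.} By Mirsky's theorem the poset splits into $L$ antichains. A $k$-chain picks at most one element from each of $k$ distinct antichains, so by AM--GM, $N_k\le\binom Lk (n/L)^k$. This gives $N_k=0$ for $L<k$, and otherwise a collision probability of roughly at most $1/k!$. Choosing $k$ so that $k!\ge n^{1+o(1)}$, i.e.\ $k=\Theta(\log n/\log\log n)$, and mixing this family with small weight $\alpha$ against the trivial scheme (weight $1-\alpha$ on the trivial part, rescaled so the total stays below $L/n$), the upper side $\Pr\le L/n$ should follow. I would double-check separately the regime $k\le L\le k^{2}$, where the Mirsky bound is loosest.

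\textbf{Lower bound (main obstacle).} Here we need the mixture to give pairs with $L\ge n/\sqrt{\log n}$ an extra collision mass of order $\sqrt{\log n}\,L/n^2$. The naive bound $N_k\ge\binom Lk$ (subsets of one LCS) yields about $(L/n)^k/k!$. With a single fixed $k$ this is far too small when $L/n\approx 1/\sqrt{\log n}$. So the first thing I would try is to randomize $k$ as well, drawing it with geometrically decaying weights over $1\le k\le K$. Small $k$ is then available for pairs with very large $L$, while the $1/k!$ factor together with $N_k=0$ for $k>L$ protects pairs with small $L$.

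The delicate step is balancing these weights. The condition $\sum_k w_k N_k/(k!\binom nk)\le L/n$ must hold for \emph{all} $L$, using the sharper chain count rather than the crude one. Simultaneously, the gain at $L\approx n/\sqrt{\log n}$ must be a $\sqrt{\log n}$ factor above the trivial $1/n$. I expect the achievable gain to come out as exactly $\sqrt{\log n}$, from the tradeoff between $k!$ and $(n/L)^k$ near $k\approx\sqrt{\log n}$. That is the computation I would carry out first, before fixing constants.
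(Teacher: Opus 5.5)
\textbf{There is a genuine gap: the second family fails for every choice of weights.} Your reduction is sound. The constant-$1/n$ scheme handles every pair with $\lcs(\pi,\sigma)\le n/\sqrt{\log n}$, and the paper uses the same observation. The trouble is the $k$-subset family: no weighting over $k$ lets it do the remaining work, so the balancing you postpone is impossible, not merely delicate.

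Let $p_k$ be the probability of using the $k$-subset family, so it contributes $p_k N_k/(k!\binom nk)$ to the collision probability $P$.

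\emph{Your Mirsky bound is tight, and this caps $p_k$.} Let $\sigma_m$ list $m$ blocks of $n/m$ consecutive values in increasing block order, each block reversed. Then $\lcs(\id,\sigma_m)=m$ and $N_k=\binom mk (n/m)^k\ge \binom nk\prod_{i<k}(1-i/m)$. With $m=k^2$ this gives $N_k\ge \frac12\binom nk$. So the constraint $P(\id,\sigma_{k^2})\le k^2/n$ alone forces $p_k\le 2k^2\,k!/n$.

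\emph{A linear-LCS pair then gets almost nothing.} For even $n$ let $\rho=(n,n-1,\dots,\frac n2+1,1,2,\dots,\frac n2)$. Then $\lcs(\id,\rho)=n/2$, so $\ulam(\id,\rho)=1/2$. But every common chain of size $k\ge 2$ lies inside $\{1,\dots,n/2\}$, so $N_k=\binom{n/2}{k}\le 2^{-k}\binom nk$. The constant part of the mixture is at most $1/n$, because of pairs with $\lcs=1$ such as $\pi$ and its reverse; this includes $k=1$, under which everything collides. Families with $k>\sqrt n$, where $\sigma_{k^2}$ is unavailable, contribute $o(1/n)$. Hence
\[
P(\id,\rho)\le \frac1n+\sum_{k\ge 2}\frac{2k^2}{n}\,2^{-k}+o\!\left(\frac1n\right)=\frac{12}{n}+o\!\left(\frac1n\right),
\]
and the distortion on this single pair is already $\Omega(n)$.

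The structural problem is that requiring an entire random $k$-set to be a common chain scales like $(L/n)^k$ on pairs whose non-LCS elements add no chains. Meanwhile, block pairs with LCS only $k^2$ already have a constant fraction of $k$-sets as common chains. The binding constraint on $p_k$ therefore comes from these block pairs, not from $1/k!$, and the $\sqrt{\log n}$ tradeoff you anticipated near $k\approx\sqrt{\log n}$ never appears.

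\textbf{The missing idea.} You need a collision event that is charged once per \emph{element} of a common subsequence, so that $P\le L/n$ holds structurally for every pair instead of through tuned weights. The paper's hash does this:
\begin{itemize}
\item Draw $\tau\in\mathbb{S}_n$, a start $z$, and an element $a$, all uniformly. Two permutations collide iff $a$ is a left-to-right $\tau$-maximum (a record) in both one-line notations read from $z$.
\item Common records form a common subsequence, so $P=\frac1n\mathbb{E}\bigl|\Rec_{\tau,z}(\pi)\cap\Rec_{\tau,z}(\sigma)\bigr|\in\bigl[\frac1n,\frac Ln\bigr]$ automatically.
\item For LCS elements $c_i$ preceding $c_j$, starting at $z=c_i$ makes $c_j$ a common record with probability at least $1/W_{i,j}$. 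Here $W_{i,j}$ is the size of the union of the two intervals between $c_i$ and $c_j$.
\item Bound $W_{i,j}$ by sums of consecutive gaps and apply Cauchy--Schwarz for each fixed $j-i$. This gives $P\ge L^2\ln L/(32n^3)$, which is exactly the $\sqrt{\log n}$ gain over $1/n$ once $L>n/\sqrt{\log n}$.
\end{itemize}
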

For two permutations $\tau,\pi \in \mathbb{S}_n$ and $z\in [n]$, define:
\[
    \Rec_{\tau, z}(\pi):= \{a\in [n] \mid z \leq_\pi a \text{ and } b <_\tau a \text{ for all $b$ such that }z \leq_\pi b <_\pi a \}.
\]
Intuitively, $\Rec_{\tau, z}(\pi)$ is the set of new record maxima according to $\tau$ obtained while scanning the one-line representation of $\pi$ starting from $z$.

For $\tau, \pi \in \mathbb{S}_n$, $a, z \in [n]$ define:
\[
    h_{\tau, z, a}(\pi) := \begin{cases}
        1 & \text{ if }\, a\in \Rec_{\tau, z}(\pi), \\
        \pi & \text{ otherwise.} 
    \end{cases}
\]
Let $\cH$ be the uniform distribution over the set of functions: $\left\{h_{\tau, z, a} \mid \tau\in \mathbb{S}_n, z\in[n], a\in[n] \right\}$. Recall that for $\pi,\sigma\in\mathbb{S}_n$:
\[
    P_\cH(\pi, \sigma):=\Pr_{h_{\tau,z,a}\sim \cH}[h_{\tau,z,a}(\pi)=h_{\tau,z,a}(\sigma)].
\]
We will show that: $O(\sqrt{\log n}/n) \cdot \ulam(\pi,\sigma)\leq P_\cH(\pi,\sigma)\leq \ulam(\pi,\sigma)$. 

\begin{lemma}\label{lem:ub-ulam-H-exp}
    For $\pi,\sigma\in \mathbb{S}_n$, $\pi\neq\sigma$, it holds that:
    \[
        P_{\cH}(\pi,\sigma)=\frac{1}{n} \cdot \E{\tau,z}{\big|\Rec_{\tau,z}(\pi) \cap \Rec_{\tau,z}(\sigma)\big|}.
    \]
\end{lemma}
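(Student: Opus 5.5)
We prove the identity by conditioning on the pair $(\tau,z)$ and averaging over the uniformly random $a\in[n]$. Drawing $h_{\tau,z,a}$ uniformly from the family is the same as drawing $\tau$, $z$, $a$ independently and uniformly. One caveat: distinct triples could in principle give the same function, and then ``uniform over the set of functions'' would differ from ``uniform over triples''. We will read the distribution as uniform over triples, which is the intended meaning; the formula $\E{\tau,z}{\cdot}$ in the statement presupposes this.

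Fix $\tau$, $z$, $a$. The key step is to determine exactly when $h_{\tau,z,a}(\pi)=h_{\tau,z,a}(\sigma)$. We treat the hash values as elements of the disjoint union $\{1\}\sqcup\mathbb{S}_n$, so the symbol $1$ is never confused with a permutation. Then there are three cases.
\begin{itemize}
\item If $a\in\Rec_{\tau,z}(\pi)\cap\Rec_{\tau,z}(\sigma)$, both values equal $1$, so there is a collision.
\item If $a$ lies in exactly one of the two sets, one value is $1$ and the other is a permutation, so there is no collision.
\item If $a$ lies in neither set, the values are $\pi$ and $\sigma$. These differ because $\pi\neq\sigma$, so there is no collision.
\end{itemize}
Hence the collision indicator equals $\mathbf{1}[a\in \Rec_{\tau,z}(\pi)\cap\Rec_{\tau,z}(\sigma)]$. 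This is exactly where the hypothesis $\pi\neq\sigma$ is used.

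Taking expectation over $a$ uniform in $[n]$, for fixed $(\tau,z)$, gives
\[
\Pr_a\bigl[h_{\tau,z,a}(\pi)=h_{\tau,z,a}(\sigma)\bigr]=\frac{1}{n}\,\bigl|\Rec_{\tau,z}(\pi)\cap\Rec_{\tau,z}(\sigma)\bigr|.
\]
Averaging over $(\tau,z)$ by the law of total expectation then yields the claimed formula for $P_{\cH}(\pi,\sigma)$.

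There is no real obstacle here. The only delicate points are the two already handled: the hash values $1$ and $\pi$ must live in disjoint ranges, and the family must be sampled uniformly over triples $(\tau,z,a)$. The range $\{1\}\sqcup\mathbb{S}_n$ is finite, so the family is also a legitimate LSH family in the sense of the definition.
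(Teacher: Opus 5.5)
Your proof is correct and is essentially the paper's argument. The paper also notes that, for $\pi\neq\sigma$, a collision occurs exactly when $a\in\Rec_{\tau,z}(\pi)\cap\Rec_{\tau,z}(\sigma)$, and then counts the favorable values of $a$ among the $n$ choices for fixed $(\tau,z)$. Your two caveats state assumptions the paper leaves implicit: the hash value $1$ must be distinct from every permutation, and sampling must be uniform over triples $(\tau,z,a)$ rather than over distinct functions. The second caveat genuinely matters, since, for example, every $h_{\tau,a,a}$ is the constant function $1$.
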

\begin{proof}
    Note that for $\pi\neq \sigma$, we have that $h_{\tau, z,a}(\pi)=h_{\tau, z, a}(\sigma)$ if and only if $a\in \Rec_{\tau,z}(\pi) \cap \Rec_{\tau,z}(\sigma)$. Conditioned on the choice of $\tau$ and $z$, there are $n$ possible values for $a$ and only $\big|\Rec_{\tau,z}(\pi) \cap \Rec_{\tau,z}(\sigma)\big|$ of them make the hashes of $\pi$ and $\sigma$ equal. This concludes the proof. 
\end{proof}

\begin{lemma}\label{lem:ub-ulam-bound-LCS}
    For any $\tau, \pi, \sigma \in \mathbb{S}_n$ and $z\in [n]$, it holds that:
    \[
        1 \leq \big|\Rec_{\tau,z}(\pi) \cap \Rec_{\tau,z}(\sigma)\big| \leq \lcs(\pi, \sigma).
    \]
\end{lemma}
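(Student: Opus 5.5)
The plan is to prove the two inequalities of \Cref{lem:ub-ulam-bound-LCS} separately; both follow directly from the definition of $\Rec_{\tau,z}$.

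\emph{Lower bound.} I will show that $z$ itself lies in $\Rec_{\tau,z}(\pi)$ for every $\pi$. Indeed, $z \leq_\pi z$ holds. Moreover, the condition ``$b <_\tau z$ for all $b$ with $z \leq_\pi b <_\pi z$'' is vacuous, because no $b$ satisfies $z\le_\pi b<_\pi z$. Applying the same argument to $\sigma$ gives $z\in\Rec_{\tau,z}(\pi)\cap\Rec_{\tau,z}(\sigma)$, so the intersection has size at least $1$.

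\emph{Upper bound.} Let $I=\Rec_{\tau,z}(\pi)\cap\Rec_{\tau,z}(\sigma)$. The key claim is that for any $\rho\in\{\pi,\sigma\}$, the order $<_\rho$ restricted to $\Rec_{\tau,z}(\rho)$ coincides with $<_\tau$.

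To prove the claim, take distinct $a,a'\in\Rec_{\tau,z}(\rho)$ with $a<_\rho a'$. Both satisfy $z\le_\rho a$ and $z \le_\rho a'$, so $z\le_\rho a<_\rho a'$. Since $a'$ is a record, every $b$ with $z\le_\rho b<_\rho a'$ has $b<_\tau a'$. Taking $b=a$ gives $a<_\tau a'$. Hence the record set, read in $\rho$-order, is increasing in $\tau$.

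Applying the claim to both $\pi$ and $\sigma$, the elements of $I$ appear in the same relative order in $\pi$ and in $\sigma$, namely the $\tau$-order. Therefore $I$, listed in that order, is a common subsequence of the one-line representations of $\pi$ and $\sigma$, and $|I|\le\lcs(\pi,\sigma)$.

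The only step needing care is the claim that records appear in increasing $\tau$-order, and specifically that the earlier record $a$ is a legitimate witness $b$ in the definition of $a'$ being a record. This requires $z \le_\rho a$, which holds because $a$ is itself in $\Rec_{\tau,z}(\rho)$. No earlier results are needed beyond the definitions.
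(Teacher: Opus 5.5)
Your proof is correct and follows essentially the paper's argument: $z$ lies in both record sets, giving the lower bound, and records are increasing in $\tau$ along the scan, so the intersection appears in the same ($\tau$-)order in both $\pi$ and $\sigma$ and forms a common subsequence. The paper phrases the second step as a contradiction for a pair with $x<_\pi y$ but $y<_\sigma x$, which is the same observation.
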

\begin{proof}
    Since $z\in \Rec_{\tau,z}(\pi) \cap \Rec_{\tau,z}(\sigma)$, the set $\Rec_{\tau,z}(\pi) \cap \Rec_{\tau,z}(\sigma)$ is always non-empty, proving the first inequality. 
    
    We now show the second inequality. If there is only one element in the intersection, the result is trivially true, since $\lcs(\pi,\sigma)\ge 1$ for any $\pi,\sigma \in \mathbb{S}_n$. Otherwise, consider any distinct $x,y\in \Rec_{\tau,z}(\pi) \cap \Rec_{\tau,z}(\sigma)$. Suppose without loss of generality that $x <_\pi y$. By definition of $\Rec_{\tau, z}(\pi)$, since $y$ is observed after $x$ in the one-line representation of $\pi$, it must hold that $x <_\tau y$. Moreover, we have $x <_\sigma y$, for otherwise we would have $x>_\tau y$, leading to a contradiction. 

    Therefore, if we let $\Rec_{\tau,z}(\pi) \cap \Rec_{\tau,z}(\sigma) = r_1 <_\pi \dots <_\pi r_t$, we also have $r_1 <_\sigma \dots <_\sigma r_t$, meaning that $r_1, \dots, r_t$ is a common subsequence of $\pi$ and $\sigma$, and therefore $t\leq \lcs(\pi, \sigma)$.
\end{proof}

Fixed $\pi, \sigma$ such that $\lcs(\pi,\sigma)\geq 2$, define $L:=\lcs(\pi,\sigma)$ and let $C=(c_1, \dots, c_L)$ be a longest common subsequence. For $1 \leq i < j \leq L$, define the intervals: 
\begin{align*}
    I^\pi_{i,j}:=\{x\in[n] \mid c_i\le_\pi x\le_\pi c_j\}, \quad \text{and}\quad I^\sigma_{i, j}:=\{x\in[n] \mid c_i\le_\sigma x\le_\sigma c_j\}.
\end{align*}
Define also $W_{i,j}:=|I_{i,j}^\pi \cup I_{i,j}^\sigma|$.

\begin{lemma}\label{lem:ub-ulam-LB-H}
    For any $\pi,\sigma\in\mathbb{S}_n$ such that $L:=\lcs(\pi,\sigma) \geq 2$, it holds that:
    \[
        P_{\cH}(\pi, \sigma)\geq \frac{1}{n^2} \cdot \sum_{1\leq i < j \leq L} \frac{1}{W_{i,j}}.
    \]
\end{lemma}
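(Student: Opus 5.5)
We work from \Cref{lem:ub-ulam-H-exp}, which gives $P_\cH(\pi,\sigma)=\frac1n\E{\tau,z}{|\Rec_{\tau,z}(\pi)\cap\Rec_{\tau,z}(\sigma)|}$. Since $\pi\neq\sigma$ can be assumed (if $\pi=\sigma$ then $P_\cH=1$ and the bound is trivial, as the sum has at most $n^2/2$ terms, each at most $1/2$), it suffices to show
\[
\E{\tau,z}{\big|\Rec_{\tau,z}(\pi)\cap\Rec_{\tau,z}(\sigma)\big|}\ \ge\ \frac1n\sum_{1\le i<j\le L}\frac{1}{W_{i,j}} .
\]
We expand the intersection size by indicators. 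For each pair $i<j$ we exhibit a specific event $E_{i,j}$ on $(\tau,z)$ that certifies some element is a common record. The natural choice is: $z=c_i$, and $c_j$ is the $\tau$-maximum of $I^\pi_{i,j}\cup I^\sigma_{i,j}$. Under $E_{i,j}$, scanning $\pi$ from $c_i$, every element $b$ with $c_i\le_\pi b<_\pi c_j$ lies in $I^\pi_{i,j}$, hence $b<_\tau c_j$. Moreover, $c_i\le_\pi c_j$, since $C$ is a common subsequence. So $c_j\in\Rec_{\tau,c_i}(\pi)$, and symmetrically $c_j\in\Rec_{\tau,c_i}(\sigma)$.

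The probability of $E_{i,j}$ is $\Pr[z=c_i]\cdot\Pr_\tau[c_j \text{ is the $\tau$-max of a fixed set of size } W_{i,j}]=\frac1n\cdot\frac1{W_{i,j}}$, using independence of $z$ and the uniform $\tau$.

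The pairs must not be double-counted. Fix $z=c_i$ and $\tau$. The events $E_{i,j}$ for different $j$ each assert that the distinct element $c_j$ is in the intersection. Hence the indicator sum satisfies
\[
|\Rec_{\tau,c_i}(\pi)\cap\Rec_{\tau,c_i}(\sigma)|\ \ge\ \sum_{j>i}\mathbf 1[E_{i,j}],
\]
because distinct $j$ give distinct elements $c_j$ (and we may even ignore the always-present element $z$). Summing over $z$, the events for different $i$ are disjoint, since they have different $z$. Taking expectations therefore yields
\[
\E{}{|\cdot|}\ \ge\ \sum_{i<j}\frac{1}{nW_{i,j}},
\]
and multiplying by $\frac1n$ gives the claim.

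The main point requiring care is the verification that $c_j$ is a record in both scans. Specifically, the condition in the definition of $\Rec$ only involves elements between $z$ and $a$ in $\pi$ (respectively $\sigma$), and those are exactly contained in $I^\pi_{i,j}$ (respectively $I^\sigma_{i,j}$). This is why the $\tau$-maximum must be taken over the union, giving the $W_{i,j}$ in the denominator. Everything else is linearity of expectation.
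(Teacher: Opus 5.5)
Your proof is correct and follows essentially the same route as the paper: fix $z=c_i$, use the event that $c_j$ is the $\tau$-maximum of $I^\pi_{i,j}\cup I^\sigma_{i,j}$, which has probability $1/W_{i,j}$ and certifies that $c_j$ is a common record. The paper then sums over $j>i$ after conditioning on $z$, where you sum indicators over distinct $c_j$ and disjoint values of $z$; the two bookkeepings are equivalent, and your handling of the trivial case $\pi=\sigma$ is also fine.
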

\begin{proof}
    Consider the case $\pi\neq \sigma$, otherwise the result is trivial, since the right-hand side is always less than $1$. For $\tau\in \mathbb{S}_n$, and $1\leq i < j \leq L$, suppose that:
    \begin{align*}
        c_j \ge_\tau x &\quad \text{ for all } x\in I_{i,j}^\pi \cup I_{i,j}^\sigma.
    \end{align*}
    Under this assumption, it holds that: $c_j \in \Rec_{\tau, c_i}(\pi) \cap \Rec_{\tau, c_i}(\sigma)$. Indeed, we have that $c_j$ is the maximum (according to $\tau$) of all indices $I_{i,j}^\pi$ between $c_i$ and $c_j$ in the one-line representation of $\pi$ and therefore it will be added to $\Rec_{\tau, z}(\pi)$ when starting from $z=c_i$ (and the same argument holds for $\sigma$).

    By \Cref{lem:ub-ulam-H-exp} and the remark above, we have:
    \begin{align*}
        P_{\cH}(\pi, \sigma)&= \frac{1}{n} \cdot \E{\tau, z}{|\Rec_{\tau,z}(\pi) \cap \Rec_{\tau,z}(\sigma)|} \\
        &\geq \frac{1}{n} \cdot \sum_{i=1}^{L-1} \Pr[z=c_i] \cdot \E{\tau}{|\Rec_{\tau,z}(\pi) \cap \Rec_{\tau,z}(\sigma)| \,\Big|\, z=c_i}\\
        & = \frac{1}{n} \cdot  \sum_{i=1}^{L-1} \frac{1}{n} \cdot \sum_{j=1}^n \Pr_{\tau}[j \in \Rec_{\tau,c_i}(\pi) \cap \Rec_{\tau,c_i}(\sigma)] \\
        & \geq \frac{1}{n^2} \sum_{i=1}^{L-1} \sum_{j=i+1}^L \Pr_{\tau}[c_j \in \Rec_{\tau,c_i}(\pi) \cap \Rec_{\tau,c_i}(\sigma)] \\
        & \geq \frac{1}{n^2} \sum_{i=1}^{L-1} \sum_{j=i+1}^L \Pr_{\tau}\left[c_j \geq_\tau x \text{ for all } x\in I_{i,j}^\pi \cup I_{i,j}^\sigma \right] \\
        & = \frac{1}{n^2} \sum_{i=1}^{L-1} \sum_{j=i+1}^L \frac{1}{W_{i,j}},
    \end{align*}
    where the last inequality follows from the remark above and the last equality uses that for a uniform at random $\tau\in \mathbb{S}_n$, the element ranked highest among $I_{i,j}^\pi \cup I_{i,j}^\sigma$ is chosen uniformly at random among its elements.
\end{proof}

\begin{lemma}\label{lem:ub-ulam-Wij-LB}
    For any $\pi,\sigma\in\mathbb{S}_n$ such that $L:=\lcs(\pi,\sigma) \geq 4$, it holds that:
    \[
        \sum_{1\leq i < j \leq L} \frac{1}{W_{i,j}} \geq \frac{L^2 \ln L}{32n}.
    \]
\end{lemma}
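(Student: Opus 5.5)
The plan is to bound each $W_{i,j}$ by a sum of ``local gap'' quantities between consecutive elements of the common subsequence $C$, and then apply an AM--HM (Cauchy--Schwarz) inequality separately for each fixed distance $d=j-i$. Summing the resulting bounds over $d$ should produce a harmonic sum, which supplies the $\ln L$ factor.

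\textbf{Step 1 (local decomposition).} For $k\in[L-1]$ I define
\[
a_k:=\big(|I^\pi_{k,k+1}|-1\big)+\big(|I^\sigma_{k,k+1}|-1\big)\ \ge 2 .
\]
Since $c_1<_\pi\dots<_\pi c_L$, the interval $I^\pi_{i,j}$ is the union of the consecutive intervals $I^\pi_{k,k+1}$ for $i\le k<j$, and these overlap only at their endpoints. Hence $|I^\pi_{i,j}|=1+\sum_{k=i}^{j-1}(|I^\pi_{k,k+1}|-1)$, and the same identity holds for $\sigma$. Combining the two with $W_{i,j}\le |I^\pi_{i,j}|+|I^\sigma_{i,j}|$ gives
\[
W_{i,j}\le 2+\sum_{k=i}^{j-1}a_k\le 2\sum_{k=i}^{j-1}a_k ,
\]
where the last step uses $a_k\ge 2$. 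Also, $\sum_{k=1}^{L-1}(|I^\pi_{k,k+1}|-1)=|I^\pi_{1,L}|-1\le n-1$, so $\sum_k a_k\le 2(n-1)<2n$.

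\textbf{Step 2 (fixed distance).} Fix $d\in[L-1]$. By AM--HM over the $L-d$ pairs $(i,i+d)$,
\[
\sum_{i=1}^{L-d}\frac{1}{W_{i,i+d}}\ \ge\ \frac{(L-d)^2}{\sum_{i}W_{i,i+d}} .
\]
Each index $k$ belongs to at most $d$ of the windows $[i,i+d-1]$. By Step 1 this gives $\sum_i W_{i,i+d}\le 2d\sum_k a_k\le 4dn$, and therefore $\sum_i 1/W_{i,i+d}\ge (L-d)^2/(4dn)$.

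\textbf{Step 3 (summing over $d$).} I restrict the sum to $1\le d\le \lfloor L/2\rfloor$, where $(L-d)^2\ge L^2/4$. This yields
\[
\sum_{i<j}\frac{1}{W_{i,j}}\ \ge\ \frac{L^2}{16n}\,H_{\lfloor L/2\rfloor}.
\]
Then $H_m\ge \ln(m+1)$, and $\lfloor L/2\rfloor+1\ge L/2$, so $H_{\lfloor L/2\rfloor}\ge \ln(L/2)$. Finally $\ln(L/2)\ge \tfrac12\ln L$ whenever $L\ge 4$. Together these give the claimed bound $L^2\ln L/(32n)$.

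The main obstacle is Step 1. One has to make sure the bound on $W_{i,j}$ telescopes into quantities whose total is $O(n)$; this uses the fact that $C$ is increasing in both $\pi$ and $\sigma$, so the consecutive intervals tile $I^\pi_{1,L}$ and $I^\sigma_{1,L}$. Everything after that is routine constant-chasing.
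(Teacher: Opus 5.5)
Your proof is correct and follows essentially the same route as the paper: bound $W_{i,j}$ by a sum of consecutive-gap quantities, then apply Cauchy--Schwarz (AM--HM) for each fixed $d=j-i$, using that each gap lies in at most $d$ windows and that the gaps have total size $O(n)$, and finally sum $(L-d)^2/(4dn)$ over $d\le\lfloor L/2\rfloor$. The only difference is bookkeeping: the paper uses $w_t=|I^\pi_{t-1,t}|+|I^\sigma_{t-1,t}|$ with a plain union bound, so that $W_{i,j}\le\sum_{t=i+1}^{j} w_t$ and $\sum_t w_t\le 4n$, whereas you subtract the shared endpoints exactly and absorb the additive $2$ via $a_k\ge 2$, which gives the same constant $4dn$.
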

\begin{proof}
    For $1 < t \leq L$, define:
    \[
        w_t:= \Big|I^\pi_{t-1,t}\Big| + \Big|I^\sigma_{t-1,t} \Big|.
    \]
    For $1\leq i < L$ and $1\leq d \leq L-i$, define $S_{i,d} := \sum_{t=i+1}^{i+d} w_t$. Note that, for $1\leq i < j \leq L$, we have:
    \begin{equation}\label{eq:ulam-ub-wij-ub} 
        W_{i,j} \leq |I_{i,j}^\pi| +|I_{i,j}^\sigma| \leq \sum_{t=i+1}^{j} w_t = S_{i,(j-i)}.
    \end{equation}
    Therefore, we have:
    \begin{equation}\label{eq:ulam-ub-rewrite-Wij}
        \sum_{1\leq i < j \leq L} \frac{1}{W_{i,j}} = \sum_{i=1}^{L-1}\sum_{j=i+1}^L \frac{1}{W_{i,j}} \overset{\eqref{eq:ulam-ub-wij-ub}}{\geq} \sum_{i=1}^{L-1}\sum_{j=i+1}^L \frac{1}{S_{i,(j-i)}} = \sum_{i=1}^{L-1} \sum_{d=1}^{L-i} \frac{1}{S_{i,d}} = \sum_{d=1}^{L-1} \sum_{i=1}^{L-d} \frac{1}{S_{i,d}}.
    \end{equation}
    For a fixed $1 \leq d < L$, by the Cauchy--Schwarz inequality, we have:
    \begin{equation}\label{eq:ulam-ub-cs}
        \left(\sum_{i=1}^{L-d} \frac{1}{S_{i,d}}\right) \cdot \left(\sum_{i=1}^{L-d} {S_{i,d}}\right) \geq \left(\sum_{i=1}^{L-d} \sqrt{\frac{1}{S_{i,d}}} \cdot \sqrt{S_{i,d}}\right)^2 = (L-d)^2.
    \end{equation}
    Note that each element of $[n]$ can be counted at most twice in $w_t$ and it can appear in at most two $w_t$'s, therefore:
    \[
        \sum_{t=2}^L w_t \leq 4n.
    \]
    Moreover, each $w_t$ can appear in at most $d$ distinct $S_{i,d}$ for $i\in \{1, \dots, L-d\}$, giving:
    \[
        \sum_{i=1}^{L-d} S_{i,d} \leq d \sum_{t=2}^L w_t \leq 4\cdot d \cdot n.
    \]
    Combining this inequality with \eqref{eq:ulam-ub-cs} gives:
    \[
        \sum_{i=1}^{L-d} \frac{1}{S_{i,d}} \geq \frac{(L-d)^2}{\sum_{i=1}^{L-d} S_{i,d}} \geq \frac{(L-d)^2}{4\cdot d \cdot n}.
    \]
    Plugging this inequality into \eqref{eq:ulam-ub-rewrite-Wij} we obtain:
    \begin{align*}
        \sum_{1 \leq i < j \leq L} \frac{1}{W_{i,j}} \geq \sum_{d=1}^{L-1} \frac{(L-d)^2}{4\cdot d \cdot n} \geq \sum_{d=1}^{\floor{L/2}} \frac{(L-d)^2}{4\cdot d \cdot n} \geq \frac{L^2}{16n} \sum_{d=1}^{\floor{L/2}} \frac{1}{d} \geq \frac{L^2 \ln L}{32 n},
    \end{align*}
    where the last inequality uses that for $L\geq 4$:
    \[
        \sum_{d=1}^{\floor{L/2}}\frac{1}{d} \geq \ln(\floor{L/2} + 1) \geq \ln\left(\sqrt{L}\right) = \frac{\ln L}{2}. \qedhere
    \]
\end{proof}

We are now ready to prove the main result of this section:
\begin{proof}[Proof of \Cref{thm:ulam-upper-bound}]
    It is sufficient to show that for each $\pi, \sigma\in\mathbb{S}_n$ it holds that: 
    \[
        \Theta(\sqrt{\log n} / n) \cdot \ulam(\pi, \sigma) \leq P_\cH(\pi, \sigma) \leq \ulam(\pi, \sigma).
    \]
    Note that if $\pi=\sigma$ then $P_\cH(\pi,\sigma)=\ulam(\pi,\sigma)=1$. Therefore, consider the case where $\pi\neq \sigma$. By combining \Cref{lem:ub-ulam-H-exp} and \Cref{lem:ub-ulam-bound-LCS} we have:
    \begin{equation}\label{eq:ulam-ub-H-bounds-1}
        \frac{1}{n} \leq P_\cH(\pi, \sigma) \leq \frac{\lcs(\pi, \sigma)}{n} = \ulam(\pi, \sigma). 
    \end{equation}
    Therefore, we only have to prove $P_\cH(\pi, \sigma) \geq \Omega(\sqrt{\log n}/n) \cdot \ulam(\pi, \sigma)$ to conclude the proof. Consider first the case where $\lcs(\pi, \sigma) \leq 3$. In this case we have:
    \[
        P_\cH(\pi, \sigma) \geq \frac{1}{n} \geq \frac{1}{3} \cdot \frac{\lcs(\pi, \sigma)}{n} = \frac{1}{3} \cdot \ulam(\pi, \sigma).
    \]
    Consider now the case $\lcs(\pi, \sigma)\geq 4$ and denote $L:=\lcs(\pi, \sigma)$. By combining \Cref{lem:ub-ulam-LB-H} and \Cref{lem:ub-ulam-Wij-LB} we obtain:
    \begin{equation}\label{eq:ulam-ub-H-bounds-2}
        P_\cH(\pi, \sigma) \geq \frac{L^2 \ln L}{32\cdot n^3} = \frac{L \ln L}{32 \cdot n^2} \cdot \ulam(\pi, \sigma).
    \end{equation}
    By \eqref{eq:ulam-ub-H-bounds-1} we also have:
    \begin{equation}\label{eq:ulam-ub-H-bounds-3}
        P_\cH(\pi, \sigma) \geq \frac{1}{L} \cdot \ulam(\pi, \sigma).
    \end{equation}
    We split the case $\lcs(\pi, \sigma)\geq 4$ in two sub-cases. Suppose first that $\lcs(\pi, \sigma) \leq \frac{n}{\sqrt{\ln n}}$. Then, \eqref{eq:ulam-ub-H-bounds-3} gives: $P_\cH(\pi, \sigma) \geq \sqrt{\ln n}/ n \cdot \ulam(\pi, \sigma)$. Suppose instead that $\lcs(\pi, \sigma) > \frac{n}{\sqrt{\ln n}}$. Then, by \eqref{eq:ulam-ub-H-bounds-2} we have:
    \[
        P_\cH(\pi, \sigma) \geq \frac{ \frac{n}{\sqrt{\ln n}} \ln \left(\frac{n}{\sqrt{\ln n}}\right) }{32 n^2} \cdot \ulam(\pi, \sigma) = \left( \frac{\ln n - \frac12 \ln\ln n}{32\cdot \sqrt{\ln n} \cdot n} \right) \cdot \ulam(\pi, \sigma) \geq \frac{\sqrt{\ln n}}{64 \cdot n} \cdot \ulam(\pi, \sigma),
    \]
    where we used that $\frac{\ln n}{2} \geq \frac{\ln \ln n}{2}$ for $n\geq 3$.
\end{proof}

\section{Lower Bound for Ulam Similarity}\label{sec:ulam-lower-bound}

In this section, we prove the following result.

\begin{restatable}{theorem}{UlamLowerBound}\label{thm:ulam-lower-bound}
    Let $n$ be any integer of the form $8^{2^k}$, for integer $k\geq 0$. The LSH distortion of $(\mathbb{S}_n,\ulam)$ 
    is $\Omega(n^\delta)$, where $\delta \ge 0.12$.
\end{restatable}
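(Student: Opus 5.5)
The plan is to prove a constant-factor gap on a fixed small instance and then amplify it multiplicatively by a product construction. The exponent is read off from $\log_8(9/7)\approx 0.1209$. If a set of permutations of $[8]$ forces distortion at least $9/7$, and each squaring of the universe squares the distortion, then for $n=8^{2^k}$ we get distortion at least $(9/7)^{2^k}=n^{\log_8(9/7)}\ge n^{0.12}$.

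\emph{Step 1 (base gadget).} I would find, by computer search, a small family $X_0\subseteq\mathbb{S}_8$ whose Ulam similarity matrix $S_0$ admits no PSD matrix $P$ satisfying $\tfrac{7}{9}S_0\le P\le S_0$ entrywise with unit diagonal. By \Cref{lem:psd-Pa} every collision kernel is such a $P$, so this gives distortion at least $9/7$. The certificate would be a symmetric matrix $W\succeq 0$, since then $\langle W,P\rangle\ge 0$ for every admissible $P$. Splitting $W$ into its diagonal part, its positive off-diagonal entries (where we use $P\le S_0$) and its negative entries (where we use $P\ge S_0/\Delta$) gives
\[\sum_x W(x,x)+\sum_{W(x,y)>0,\,x\neq y}W(x,y)S_0(x,y)\;\ge\;\frac1\Delta\sum_{W(x,y)<0}|W(x,y)|S_0(x,y).\]
Choosing $W$ so that this forces $\Delta\ge 9/7$ is an explicit finite check.

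\emph{Step 2 (product structure of Ulam).} For $X\subseteq\mathbb{S}_m$ and $Y\subseteq\mathbb{S}_{m'}$, I would embed $X\times Y$ into $\mathbb{S}_{mm'}$ by a wreath product: split $[mm']$ into $m$ blocks of size $m'$, order the blocks by $x$, and order each block internally by the same $y$. The key claim is $\lcs((x,y),(x',y'))=\lcs(x,x')\cdot\lcs(y,y')$, which gives $\ulam$ on the product as $S_X\otimes S_Y$. The lower bound $\ge$ comes from concatenating an inner LCS inside each block of an outer LCS. For $\le$, the blocks that contribute elements to a common subsequence must appear in the same relative order in both permutations, so they form a common subsequence of $x,x'$. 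Each of them contributes at most $\lcs(y,y')$ elements. Applying this with $X=Y=X_0^{\otimes 2^j}$ repeatedly gives families in $\mathbb{S}_n$ for $n=8^{2^k}$.

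\emph{Step 3 (distortion is supermultiplicative; the main obstacle).} I need $\Delta(S_X\otimes S_Y)\ge \Delta(S_X)\Delta(S_Y)$, at least for the gadget's tensor powers. Simply tensoring certificates $W\otimes W'$ does not work, because products of two negative entries become positive and the sign bookkeeping breaks. The first thing I would try is a compression argument. Take an admissible $P$ on $X\times Y$ with distortion $\Delta$, and contract out the $Y$-coordinate against the $Y$-certificate: for each pair $(x,x')$, form the $Y\times Y$ block $P_{x,x'}$ and consider $\langle W', P_{x,x'}\rangle$. The resulting matrix on $X$ is PSD because $W'\succeq 0$ and $P\succeq 0$. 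Within a block, $P_{x,x'}$ is sandwiched between $\tfrac1\Delta S_X(x,x')S_Y$ and $S_X(x,x')S_Y$. So the inequality from Step 1, applied blockwise, should show that the contracted matrix, after normalization, is an admissible kernel for $S_X$ with distortion at most $\Delta/\Delta_Y$. The diagonal blocks need separate care, since there $P_{x,x}$ has unit diagonal exactly.

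If the signs do not cooperate in general, I would instead choose the gadget certificate $W$ to be "sign-robust" and verify by computer that its tensor square still satisfies the Step 1 inequality with ratio $(9/7)^2$. I would then argue inductively on the recursive structure $X_{j+1}=X_j\times X_j$. Either route completes the induction, giving distortion at least $(9/7)^{2^k}=n^{\log_8(9/7)}=\Omega(n^{0.12})$.
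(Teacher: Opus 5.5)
\textbf{Gap in Step 3.} Your Steps 1 and 2 agree with the paper, but Step 3, where the exponent is actually earned, does not go through, and the compression argument fails. The Step 1 inequality $\langle W',\cdot\rangle\ge 0$ holds only for PSD matrices, and the off-diagonal blocks $P_{x,x'}$ ($x\neq x'$) of a PSD $P$ are not PSD. Write $W'=W'_+-W'_-$ with $a=\langle W'_+,S_Y\rangle$ and $b=\langle W'_-,S_Y\rangle$. The sandwich then only gives
\[
S_X(x,x')\Bigl(\frac{a}{\Delta}-b\Bigr)\;\le\;\langle W',P_{x,x'}\rangle\;\le\;S_X(x,x')\Bigl(a-\frac{b}{\Delta}\Bigr).
\]
Since $b/a=\Delta_Y\ge 1$, the left side is non-positive. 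So the contracted matrix is PSD and bounded above, but it has no usable lower bound, and no normalization makes it an admissible kernel for $S_X$. If you pair it with the $X$-certificate anyway, you recover exactly the naive tensor bound $\Delta\ge(\Delta_X+\Delta_Y)/(1+\Delta_X\Delta_Y)$, which is at most $1$.

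The fallback fails too. With your sign-splitting, the tensor square $W\otimes W$ on the full product always certifies at most $2\Delta_0/(1+\Delta_0^2)\le 1$, where $\Delta_0$ is the value certified by $W$. In any case, a computer check of one level gives no induction over $k$.

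\textbf{The missing idea.} You need a certificate class closed under amplification, and you must amplify only a sub-family of the product. The paper takes disjoint families $A,B$ with $M_A=M_B$ (in the base case $B_i=\tau A_i$ is a relabeling of $A_i$). Its certificate is $W=\begin{pmatrix}U&-V\\-V^T&U\end{pmatrix}\succeq 0$ with $U,V$ entrywise non-negative. Then $P\le S$ is used exactly on the $U$-blocks and $P\ge S/\Delta$ exactly on the $V$-blocks, which gives $\Delta\ge\tr(VM_{A,B})/\tr(UM_A)$. The base instance has $U=V=J_4$ and value $9/7$.

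It then amplifies to $A\wr A\cup B\wr B$ only, discarding $A\wr B$ and $B\wr A$; this removes all the mixed $U\otimes V$ terms. The principal submatrix of $W\otimes W$ on these indices is $\begin{pmatrix}U\otimes U&V\otimes V\\V^T\otimes V^T&U\otimes U\end{pmatrix}\succeq 0$. Conjugating by $\operatorname{diag}(I,-I)$ flips the sign of the $V\otimes V$ blocks while preserving PSD-ness; these blocks are the products of two negative entries that broke your bookkeeping. Since $M_{A\wr A}=M_A\otimes M_A=M_B\otimes M_B=M_{B\wr B}$, the result is again a witness of the same form, and its value is the square of the old one. Iterating yields $(9/7)^{2^k}$ for every $k$. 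This never requires the general inequality $\Delta(S_X\otimes S_Y)\ge\Delta(S_X)\Delta(S_Y)$, which your sketch does not establish.
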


At a high level, we obtain a lower bound on the LSH distortion by starting with a small instance that exhibits constant distortion, and then iteratively amplifying it to handle larger values of $n$. In each step, both the distortion bound and the permutation domain grow quadratically. Our base case has $n=8$, therefore our construction produces instances for any $n$ of the form $8^{2^k}$, where $k$ is a non-negative integer. The main tool enabling this amplification is the wreath product.

For any two subsets $A,B \subseteq \mathbb{S}_n$, we define the similarity matrix $M_{A,B} \in \mathbb{R}^{|A|\times |B|}$ by $M_{A,B}(\pi,\sigma) = \ulam(\pi,\sigma)$ for any $\pi \in A$ and $\sigma \in B$. For simplicity, when $A$ and $B$ are the same, we will adopt a lighter notation, and denote by $M_A$ the matrix $M_{A,A}$. We note that the matrix $M_{(A\cup B)}$ has a simple block structure:
\[
    M_{(A\cup B)} =
    \begin{pmatrix}
        M_{A} & M_{A,B} \\
        M_{B,A} & M_{B}
    \end{pmatrix},
\]
where we fixed an ordering in which the permutations in $A$ come before those in $B$.

\begin{definition}[Witness]\label{def:ulam-witness}
    For $n> 0$, let $A,B\subseteq \mathbb{S}_n$ be disjoint subsets such that $|A|=|B|$ and $M_{A}=M_{B}$. Let $W\in \mathbb{R}^{(|A|+|B|) \times (|A|+|B|)}$ be a PSD block matrix such that:
    \[
        W = \begin{pmatrix} U & -V \\ -V^T & U \end{pmatrix} \succeq 0,
    \]
    where $U,V\in \mathbb{R}^{|A|\times |A|}$ are entry-wise non-negative symmetric matrices such that $\tr(U \cdot M_{A})\neq 0$.%
    We call the triple $(A, B, W)$ a \emph{witness} of size $n$ and we define its value to be:
    \[
        \wvalue(A, B, W):= \frac{\tr(V \cdot M_{A,B})}{\tr(U \cdot M_{A})}.
    \]
\end{definition}

We now show that the existence of a witness implies a lower bound to the LSH distortion for Ulam similarity on $\mathbb{S}_n$.

\begin{lemma}\label{lem:witness-implies-ulam-lb}
    Let $(A, B, W)$ be a witness of size $n$ (\Cref{def:ulam-witness}). Then, the LSH distortion for Ulam similarity space $(\mathbb{S}_n, \ulam)$ is at least the value of the witness, $\wvalue(A , B, W)$, even when restricted only to permutations in $A\cup B \subseteq \mathbb{S}_n$.
\end{lemma}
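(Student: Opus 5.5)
Take any distribution $\cH$ that $\Delta$-approximates $(A\cup B,\ulam)$, and set $P := P_\cH$ restricted to $A\cup B$. Ordering $A$ before $B$, write
\[
    P = \begin{pmatrix} P_{A} & P_{A,B} \\ P_{B,A} & P_{B} \end{pmatrix}.
\]
By \Cref{lem:psd-Pa}, $P\succeq 0$. Since $W\succeq 0$ and the trace of a product of two PSD matrices is non-negative, $\tr(W P)\ge 0$.

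Expanding this block-wise gives
\[
    \tr(UP_A) + \tr(UP_B) - \tr(V P_{B,A}) - \tr(V^T P_{A,B}) \ge 0 .
\]
Because $P_{B,A}=P_{A,B}^T$ and $V$ is symmetric, the two cross terms are equal. Hence
\[
    2\,\tr(V P_{A,B}) \le \tr(U P_A) + \tr(U P_B).
\]
Here $\tr(VX) = \sum_{\pi,\sigma} V(\pi,\sigma)X(\sigma,\pi)$ is an entrywise pairing, using the symmetry of $V$ and $U$.

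Next we use that $U,V$ are entrywise non-negative together with the sandwich bounds $\tfrac1\Delta M \le P \le M$ entrywise.

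\textbf{Upper side.} Since $P\le M$ entrywise and $U\ge 0$, we get $\tr(UP_A)\le \tr(UM_A)$. Similarly $\tr(UP_B)\le \tr(UM_B)$, and $\tr(UM_B)=\tr(UM_A)$ because $M_A=M_B$.

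\textbf{Lower side.} Since $P\ge M/\Delta$ entrywise and $V\ge 0$, we get $\tr(VP_{A,B}) \ge \tfrac1\Delta \tr(VM_{A,B})$.

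Combining, $\tfrac{2}{\Delta}\tr(VM_{A,B}) \le 2\tr(UM_A)$.

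The witness definition requires only $\tr(UM_A)\ne 0$, but $U\ge 0$ and $M_A$ has strictly positive entries (every LCS is at least $1$), so in fact $\tr(UM_A)>0$. Dividing gives $\Delta \ge \tr(VM_{A,B})/\tr(UM_A) = \wvalue(A,B,W)$.

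Nothing in the argument uses permutations outside $A\cup B$, so the bound already holds for the restricted space, as claimed. The only delicate step is the orientation of the cross terms in the trace expansion. It is resolved by the symmetry of $P$ and of $V$, which lets us identify $\tr(V^TP_{A,B})$ with $\tr(VP_{B,A})$ and both with the entrywise sum $\sum_{\pi\in A,\sigma\in B} V(\pi,\sigma)P(\pi,\sigma)$. Here $V$ is indexed by $A\times B$ through the fixed orderings of $A$ and $B$; I read the trace in the definition of $\wvalue$ with the same indexing convention.
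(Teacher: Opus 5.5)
Your proposal is correct and follows the paper's own argument. You get $\tr(WP)\ge 0$ from the PSDness of $W$ and of the collision matrix, then expand blockwise and identify the two cross terms. From there you apply the entrywise sandwich bounds, weighted by the non-negative $U$ and $V$, and use $M_A=M_B$. Your explicit check that $\tr(UM_A)>0$, not merely $\neq 0$, is a small improvement: the paper divides by this quantity without remarking on its sign.
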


\begin{proof}
Consider any distribution $\cH$ over functions on $\mathbb{S}_n$ which $\Delta$-approximates $(A \cup B,\ulam)$ for $\Delta\geq 1$. For every $\pi,\sigma\in A \cup B$ we have:
\begin{equation}\label{eq:ulam-assume-lsh-prop}
    \frac{M_{A\cup B}(\pi,\sigma)}{\Delta}=\frac{\ulam(\pi,\sigma)}{\Delta} \leq P_{\cH}(\pi,\sigma) \leq \ulam(\pi,\sigma) = M_{A\cup B}(\pi,\sigma).
\end{equation}
 Note that, having fixed an ordering of the elements of $A\cup B$ that puts the elements of $A$ before the elements of $B$, $P_{\cH}$ gets naturally divided into four square blocks:
\[
    P_{\cH} = \begin{pmatrix}
        P_{A} & P_{A,B}\\
        P_{B,A} & P_B
    \end{pmatrix}.
\]

By \Cref{lem:psd-Pa}, $P_{\cH} \succeq 0$ and $W \succeq 0$, we have $\tr(WP_{\cH}) \ge 0$. Expanding blockwise, and using $V^T=V$ and $P_{B,A}^T=P_{A,B}$, we have:
\[
    \tr(U P_{A}) + \tr(U P_{B}) - 2\tr(V P_{A,B}) \ge 0.
\]
Note that, by \Cref{eq:ulam-assume-lsh-prop}, 
$P_{\cH}(\pi,\sigma) \le M_{A\cup B}(\pi,\sigma)$ and $P_{\cH}(\pi,\sigma) \geq (1/\Delta) \cdot M_{A\cup B}(\pi,\sigma)$ for all $\pi,\sigma$. Therefore, we have:
\[
    \tr(U \cdot M_{A}) + \tr(U \cdot M_{B}) \ge {2\over \Delta} \cdot \tr(V \cdot M_{A,B}).
\]
Since $M_{A}=M_{B}$, we have $\tr(U\cdot M_{A})=\tr(U \cdot M_{B})$. Hence,
\[
\Delta \ge \frac{\tr(V \cdot M_{A,B})}{\tr(U\cdot M_{A})} = \wvalue(A, B, W).
\qedhere
\]
\end{proof}

\smallskip

For permutations $\pi,\sigma\in \mathbb{S}_n$ the \emph{wreath product} $\pi \wr \sigma\in \mathbb{S}_{n^2}$ is defined as follows. The elements of $[n^2]$ are partitioned into $n$ blocks where the $i$th block, $i\in[n]$, contains elements $\Sigma_i:=\{(i-1)\cdot n + 1, \dots, i\cdot n\}$. The resulting permutation $\pi\wr \sigma$ is obtained by permuting the elements in each block with the inner permutation $\sigma$ and then permuting the blocks with the outer permutation $\pi$. For two sets $A,B\subseteq \mathbb{S}_n$, we define $A\wr B:=\{\pi\wr \sigma \mid \pi\in A, \sigma\in B\}$. We use the following key property of the wreath product for amplification:

\begin{lemma}\label{lem:ulam-wreath-prod}
    For any $\pi,\sigma,\tau_1,\tau_2 \in \mathbb{S}_n$ we have:
    \[
        \lcs(\pi\wr \sigma, \tau_1\wr \tau_2) = \lcs(\pi,
        \tau_1) \cdot \lcs(\sigma, \tau_2).
    \]
\end{lemma}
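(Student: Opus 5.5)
We prove the two inequalities separately, after fixing a concrete description of the wreath product. In the one-line representation, $\pi\wr\sigma$ is a concatenation of $n$ consecutive segments of length $n$. The $k$-th segment consists of the elements of block $\Sigma_{\pi(k)}$, arranged internally according to $\sigma$: it lists $(\pi(k)-1)n+\sigma(1),\dots,(\pi(k)-1)n+\sigma(n)$. Two consequences drive the whole argument. The relative order of two elements in distinct blocks $\Sigma_p,\Sigma_q$ is the relative order of $p,q$ in $\pi$. The relative order of two elements $(p-1)n+u$ and $(p-1)n+v$ in the same block is the relative order of $u,v$ in $\sigma$.

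\textbf{Lower bound} $\lcs(\pi\wr\sigma,\tau_1\wr\tau_2)\ge \lcs(\pi,\tau_1)\cdot\lcs(\sigma,\tau_2)$. Let $p_1,\dots,p_a$ be a common subsequence of $\pi,\tau_1$ and let $u_1,\dots,u_b$ be a common subsequence of $\sigma,\tau_2$. Consider the sequence $(p_i-1)n+u_j$, listed in lexicographic order of $(i,j)$. By the two ordering facts, this sequence appears in increasing position in both $\pi\wr\sigma$ and $\tau_1\wr\tau_2$. So it is a common subsequence of length $ab$.

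\textbf{Upper bound.} Let $T$ be a common subsequence of $\pi\wr\sigma$ and $\tau_1\wr\tau_2$. Partition $T$ by block, setting $T_p := T\cap\Sigma_p$, and let $P := \{p : T_p\neq\emptyset\}$.
\begin{itemize}
\item First, $P$, ordered by $\pi$, is a common subsequence of $\pi$ and $\tau_1$. Take $p\neq q$ in $P$ with $p<_\pi q$, and pick $x\in T_p$, $y\in T_q$. Then $x$ precedes $y$ in $\pi\wr\sigma$, hence also in $\tau_1\wr\tau_2$ since $T$ is common, hence $p<_{\tau_1} q$. This gives $|P|\le\lcs(\pi,\tau_1)$.
\item Second, for each $p$, the set $\{u : (p-1)n+u\in T_p\}$ is in the same order under $\sigma$ and under $\tau_2$, by the same-block ordering fact. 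This gives $|T_p|\le\lcs(\sigma,\tau_2)$.
\end{itemize}
Summing over $p\in P$ yields $|T|\le \lcs(\pi,\tau_1)\,\lcs(\sigma,\tau_2)$.

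The main care point is the first bullet. The blocks of a common subsequence need not be contiguous in $T$; the argument only needs that every pair of elements from distinct blocks is ordered consistently. Apart from that, the proof is routine once the one-line description of $\pi\wr\sigma$ is pinned down, including the convention of which permutation acts on blocks.
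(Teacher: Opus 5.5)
Your proof is correct and follows the paper's argument. For the lower bound you concatenate relabeled inner common subsequences along an outer one. For the upper bound you bound the number of contributing blocks by $\lcs(\pi,\tau_1)$ and each block's contribution by $\lcs(\sigma,\tau_2)$.

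One small inaccuracy is in your closing remark. Each block occupies a contiguous segment of $\pi\wr\sigma$, and $T$ inherits its order from $\pi\wr\sigma$, so the elements of each $T_p$ are in fact contiguous in $T$. Your pairwise argument does not rely on this either way.
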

\begin{proof}
In the permutation $\pi\wr \sigma$, the blocks appear in the order given by $\pi$, and within every block the elements appear in the order given by $\sigma$; similarly for $\tau_1\wr \tau_2$.

We first prove that $\lcs(\pi\wr \sigma, \tau_1\wr \tau_2) \le \lcs(\pi,\tau_1) \cdot \lcs(\sigma, \tau_2)$.  Let $Z$ be any common subsequence of $\pi\wr \sigma$ and
$\tau_1\wr \tau_2$. Since the alphabets $\Sigma_i$ are pairwise disjoint, an element from a block
$\Sigma_i$ can only be matched with the same element from the same block $\Sigma_i$. Therefore,
if we track the sequence of block indices that contribute at least one matched element to $Z$,
then this is a common subsequence of the outer permutations $\pi$ and $\tau_1$; thus, the
number of contributing blocks is at most $\lcs(\pi,\tau_1)$.

Let $\Sigma_i$ be any contributing block. The matched elements of $Z$ in this block is
a common subsequence of the ordering of $\Sigma_i$ inside $\pi\wr \sigma$ and the ordering of $\Sigma_i$ inside $\tau_1\wr \tau_2$. These two internal orderings are $\sigma$ and $\tau_2$, relabeled with the alphabet
$\Sigma_i$; therefore, this block contributes 
at most $\lcs(\sigma,\tau_2)$ matched elements. Summing over
all contributing blocks, we obtain $\lcs(\pi\wr \sigma,\tau_1\wr \tau_2)\le \lcs(\pi,\tau_1)\cdot \lcs(\sigma,\tau_2).$

We now show $\lcs(\pi\wr \sigma, \tau_1\wr \tau_2) \ge \lcs(\pi,\tau_1) \cdot \lcs(\sigma, \tau_2)$. Consider an optimal common subsequence of the outer permutations $\pi$ and $\tau_1$, consisting of $\lcs(\pi,\tau_1)$ block indices. For each such matched block $\Sigma_i$, consider an optimal common subsequence of the inner permutations $\sigma$ and $\tau_2$, relabeled using $\Sigma_i$. Since the block indices appear in the same relative order in both $\pi$ and $\tau_1$, concatenating these common subsequences over the chosen blocks yields a valid common subsequence of $\pi\wr \sigma$ and $\tau_1\wr \tau_2$ of length $\lcs(\pi,\tau_1)\cdot \lcs(\sigma,\tau_2)$.
\end{proof}

We now show how to use the wreath product to amplify the size of the witness. We will make use of the \emph{Kronecker product}. We recall that for two matrices $X \in \mathbb{R}^{n\times m}$ and $Y\in \mathbb{R}^{s\times t}$, the Kronecker product $X \otimes Y \in\mathbb{R}^{ns \times mt}$ is the following block matrix:
\[
    X\otimes Y = \begin{pmatrix}
        x_{11} \cdot Y & \dots & x_{1m} \cdot Y \\
        \vdots & \ddots & \vdots\\
        x_{n1} \cdot Y & \dots & x_{nm} \cdot Y
    \end{pmatrix}.
\]

\begin{lemma}\label{lem:ulam-wreath-matrix}
    Consider any disjoint sets $A,B\subseteq\mathbb{S}_n$ with $|A|=|B|$. Let $A'=A\wr A$ and $B'=B\wr B$. Then, we have:
    \[
        M_{(A'\cup B')} = 
        \begin{pmatrix}
        M_{A} \otimes M_{A} & M_{A,B} \otimes M_{A,B} \\
        M_{B,A} \otimes M_{B,A} & M_{B} \otimes M_{B}
    \end{pmatrix}.
    \] 
\end{lemma}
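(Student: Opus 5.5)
The plan is to verify the block identity entry by entry, using \Cref{lem:ulam-wreath-prod} together with a careful choice of how the rows and columns of $M_{(A'\cup B')}$ are indexed. First I fix orderings $A=(\alpha_1,\dots,\alpha_m)$ and $B=(\beta_1,\dots,\beta_m)$. I then order $A'$ lexicographically as $(\alpha_i\wr\alpha_k)_{(i,k)}$, with outer index $i$ major and inner index $k$ minor, and order $B'$ the same way. With this ordering, the entry of $X\otimes Y$ in row $(i,k)$ and column $(j,l)$ is $X(i,j)\,Y(k,l)$, which matches the pairing of outer factors with outer factors and inner factors with inner factors.

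Before computing entries, I have to check that these indexings are faithful. This means the map $(\pi,\sigma)\mapsto\pi\wr\sigma$ should be injective, and $A'$ and $B'$ should be disjoint, so that each row and column corresponds to a distinct permutation. Injectivity holds because $\pi\wr\sigma$ determines both factors. The block order recovers $\pi$, since the elements of block $\Sigma_{\pi(1)}$ appear first, and so on. The order within any block recovers $\sigma$. For disjointness, suppose $\alpha\wr\alpha'=\beta\wr\beta'$. Injectivity then forces $\alpha=\beta$, which is impossible because $A\cap B=\emptyset$. Hence $A'\cap B'=\emptyset$, and $|A'|=|B'|=m^2$.

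The core step is a single computation of entries. For $\pi_1,\pi_2,\sigma_1,\sigma_2\in\mathbb{S}_n$, \Cref{lem:ulam-wreath-prod} gives
\[
\ulam(\pi_1\wr\pi_2,\sigma_1\wr\sigma_2)=\frac{\lcs(\pi_1,\sigma_1)\,\lcs(\pi_2,\sigma_2)}{n^2}=\ulam(\pi_1,\sigma_1)\cdot\ulam(\pi_2,\sigma_2),
\]
because the normalization $n^2$ of $\mathbb{S}_{n^2}$ factors as $n\cdot n$. I apply this to each of the four block types in turn. For $A'\times A'$, the entry at $((i,k),(j,l))$ is $M_A(\alpha_i,\alpha_j)M_A(\alpha_k,\alpha_l)$, which is the corresponding entry of $M_A\otimes M_A$. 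The $A'\times B'$ block gives $M_{A,B}\otimes M_{A,B}$ in the same way, and likewise the $B'\times A'$ block gives $M_{B,A}\otimes M_{B,A}$ and the $B'\times B'$ block gives $M_B\otimes M_B$.

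I do not expect a real obstacle, since the statement is essentially bookkeeping. The only point that needs care is the index convention: the Kronecker ordering must match the lexicographic ordering with the outer factor major. I also need to make sure the factorization is applied with outer factors paired to outer factors and inner factors paired to inner factors, which is exactly the form \Cref{lem:ulam-wreath-prod} provides.
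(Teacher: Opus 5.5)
Your proposal is correct and follows the paper's proof: you apply \Cref{lem:ulam-wreath-prod} entrywise, use $n^2 = n\cdot n$ to factor the normalization, and order $A'$ and $B'$ with the outer factor major so that the indexing matches the Kronecker product. Your extra check that $(\pi,\sigma)\mapsto\pi\wr\sigma$ is injective and that $A'\cap B'=\emptyset$ is left implicit in the paper. It is exactly what makes the block structure well defined and each block of size $|A|^2\times|A|^2$.
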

\begin{proof}
    Let $\pi, \sigma\in A' \cup B'$ and suppose that $\pi=\pi_1 \wr \pi_2$, $\sigma=\sigma_1 \wr \sigma_2$. By \Cref{lem:ulam-wreath-prod},
    \[
        M_{(A'\cup B')}(\pi,\sigma)=\frac{\lcs(\pi,\sigma)}{n^2}=\frac{\lcs(\pi_1, \sigma_1)}{n} \cdot \frac{\lcs(\pi_2, \sigma_2)}{n} = M_{(A\cup B)}(\pi_1, \sigma_1) \cdot M_{(A\cup B)}(\pi_2, \sigma_2).
    \]
    Then, by considering the ordering over $A'$ obtained by the ordering of $A$ when we consider first the outer permutation and then the inner permutation (and similarly for $B'$), we have that each entry of the matrix is equivalent to the Kronecker product. 
\end{proof}

\begin{lemma}\label{lem:ulam-boosting}
    Let $(A, B, W)$ be a witness of size $n$. Let $A'=A\wr A$, $B'=B\wr B$, and:
    \[
        W'= 
        \begin{pmatrix}
            U\otimes U & -(V \otimes V)\\
            -(V^T \otimes V^T) & U\otimes U
        \end{pmatrix}.
    \] 
    Then, $(A', B', W')$ is a witness of size $n^2$ and $\wvalue(A', B', W') \geq \wvalue(A, B, W)^2$.
\end{lemma}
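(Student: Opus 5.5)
We verify the witness conditions for $(A',B',W')$ one at a time and then compute the value using the mixed-product and trace properties of the Kronecker product.

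\textbf{Set conditions.} $A'$ and $B'$ are disjoint: since $A\cap B=\emptyset$, the outer permutation of an element of $A'$ lies in $A$ and that of an element of $B'$ lies in $B$. The outer permutation of $\pi\wr\sigma$ is recoverable from it (it is the order in which the blocks $\Sigma_i$ appear), and likewise the inner permutation is recoverable. Hence the map $(\pi,\sigma)\mapsto\pi\wr\sigma$ is injective. This gives $|A'|=|A|^2=|B|^2=|B'|$. By \Cref{lem:ulam-wreath-matrix} and $M_A=M_B$, we get $M_{A'}=M_A\otimes M_A=M_B\otimes M_B=M_{B'}$, and $M_{A',B'}=M_{A,B}\otimes M_{A,B}$.

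\textbf{Matrix conditions.} $U\otimes U$ and $V\otimes V$ are entrywise nonnegative and symmetric, because $U,V$ are and $(X\otimes X)^T=X^T\otimes X^T$. The main step is $W'\succeq 0$. My plan is to exhibit $W'$ as a principal submatrix of the PSD matrix $W\otimes W$. Expanding,
\[
W\otimes W=\begin{pmatrix} U\otimes W & -V\otimes W\\ -V^T\otimes W & U\otimes W\end{pmatrix},
\]
and each inner copy of $W$ again splits into $U$ and $-V$ blocks. Index rows by pairs $(s,t)\in\{1,2\}^2$ of (outer block, inner block). Restrict to the index set with $s=t$, i.e.\ the diagonal blocks $(1,1)$ and $(2,2)$. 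The resulting principal submatrix has the following blocks:
\begin{itemize}
\item the $(1,1),(1,1)$ block is $U\otimes U$;
\item the $(2,2),(2,2)$ block is $U\otimes U$;
\item the $(1,1),(2,2)$ block is $(-V)\otimes(-V)=V\otimes V$.
\end{itemize}
The off-diagonal sign is $+$, not the $-$ that $W'$ requires. To fix the sign, conjugate by $D=\operatorname{diag}(I,-I)$, which preserves PSDness. Then $D(\cdot)D$ of this submatrix equals $W'$. Since $W\otimes W\succeq0$ (Kronecker products of PSD matrices are PSD), every principal submatrix is PSD, and hence $W'\succeq 0$. Care is needed only with the permutation of indices realizing the submatrix, which is a simultaneous row/column permutation and so is harmless.

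\textbf{Value.} Using $\tr(XY\otimes ZT)=\tr((X\otimes Z)(Y\otimes T))=\tr(XY)\tr(ZT)$, we get
\[
\tr\bigl((U\otimes U)M_{A'}\bigr)=\tr(UM_A)^2\neq 0
\quad\text{and}\quad
\tr\bigl((V\otimes V)M_{A',B'}\bigr)=\tr(VM_{A,B})^2.
\]
So $\wvalue(A',B',W')=\wvalue(A,B,W)^2$, which in particular gives the claimed inequality. The ordering conventions of the rows of $M_{A'}$ and of $U\otimes U$ must agree (outer then inner), which matches \Cref{lem:ulam-wreath-matrix}.

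The PSD step is the only nontrivial one, and the principal-submatrix-plus-sign-flip argument above handles it.
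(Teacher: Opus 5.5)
Your proof is correct and follows essentially the same route as the paper: you realize $W'$, after conjugation by $\operatorname{diag}(I,-I)$, as a principal submatrix of $W\otimes W$, and you compute the value via the mixed-product and trace identities for Kronecker products. You also verify that $A'$ and $B'$ are disjoint and have equal size, using injectivity of the wreath map, which the paper leaves implicit.
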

\begin{proof}
    Note that $U\otimes U$ and $V\otimes V$ are non-negative and symmetric. Moreover, since $W\succeq 0$, we have that $W\otimes W \succeq 0$. Define:
    \[
        W''= \begin{pmatrix}
            U\otimes U & V \otimes V\\
            V^T \otimes V^T & U\otimes U
        \end{pmatrix}.
    \]
    Since $W''$ is a principal sub-matrix of $W\otimes W$, we have that $W''\succeq 0$. Observe now that $W' = \text{diag}(I, -I) \cdot W'' \cdot \text{diag}(I, -I)$, and therefore $W'\succeq 0$ since for any matrix $X\succeq 0$ and any matrix $Y$, $YXY^{T} \succeq 0$. Moreover, by \Cref{lem:ulam-wreath-matrix} and since $M_{A}=M_{B}$, we also have $M_{A'}=M_{B'}$. Then, $(A', B', W')$ is a valid witness of size $n^2$. We now compute its value. 
    
    For matrices $A,B,C,D$ it holds that $\tr((A \otimes B)(C \otimes D)) = \tr(AC) \cdot \tr(BD)$. By using this property, and by \Cref{lem:ulam-wreath-matrix}, we have:
    \begin{align*}
        \tr((V\otimes V) \cdot M_{A',B'}) &= \tr((V\otimes V) \cdot (M_{A,B} \otimes M_{A,B})) = \tr(V \cdot M_{A,B})^2, \text{ and}\\
        \tr((U\otimes U) \cdot M_{A'}) &= \tr((U\otimes U) \cdot (M_{A} \otimes M_{A})) = \tr(U \cdot M_{A})^2.
    \end{align*}
    Note that since $\tr(U \cdot M_{A})\neq 0$, $\tr((U\otimes U) M_{A'})\neq 0$. Then:
    \[
        \wvalue(A', B', W') = \frac{\tr((V\otimes V) \cdot M_{A',B'})}{\tr((U\otimes U) \cdot M_{A'})} = \left(\frac{\tr(V \cdot M_{A,B})}{\tr(U \cdot M_{A})}\right)^2 = \wvalue(A, B, W)^2. \qedhere
    \]
\end{proof}

We now provide our starting instance.

\begin{proposition}\label{prop:ulam-small-instance}
    There exists a witness $(A_0, B_0, W_0)$ of size $8$ and value $\geq 9/7$.
\end{proposition}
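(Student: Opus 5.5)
The plan is to exhibit an explicit witness, found by computer search and then certified by hand (or by exact rational arithmetic). First I would choose the sets. The natural way to guarantee $M_{A_0}=M_{B_0}$ is to take $B_0=\phi(A_0)$ for a map $\phi$ that preserves LCS. Two such maps are available.
\begin{itemize}
\item Right translation $\pi\mapsto\pi\circ\rho$ fails, since LCS is invariant only under relabeling of values, $\pi\mapsto\rho\circ\pi$.
\item Reversal of the one-line notation also preserves LCS between pairs.
\end{itemize}
So I would take $A_0\subseteq\mathbb{S}_8$ to be a small family, say $4$ to $8$ permutations, and set $B_0=\{\rho\circ\pi \mid \pi\in A_0\}$ for a fixed relabeling $\rho$, or $B_0$ equal to the reversals of $A_0$. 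Then $\ulam(\rho\pi,\rho\sigma)=\ulam(\pi,\sigma)$, so $M_{B_0}=M_{A_0}$ automatically, while the cross block $M_{A_0,B_0}$ can differ substantially. I would also need $A_0\cap B_0=\emptyset$.

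Next comes the choice of $W_0$. By \Cref{lem:witness-implies-ulam-lb}, maximizing $\wvalue$ over valid $W$ for fixed $A_0,B_0$ is the dual of the SDP
\[
\min\Big\{\Delta : \exists P\succeq 0,\ \tfrac1\Delta M\le P\le M \text{ entrywise}\Big\}.
\]
The optimal $W$ is therefore obtained numerically as dual multipliers. The sign pattern, with nonnegative $U$ on the diagonal blocks and $-V$ off the diagonal, reflects that the upper constraints bind within $A_0$ and within $B_0$, while the lower constraints bind across the blocks. The search would run over candidate $A_0$, for example structured families built from block-reversals or shifts of $(1,\dots,8)$, where within-family LCS is large but cross LCS is also large, so that PSDness of $P$ is violated. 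I would solve the dual SDP and keep a configuration with value above $9/7$.

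For the certificate, I would round the numerical $U,V$ to small rational or integer entries, symmetric and nonnegative, preferably exploiting symmetry so that, for example, $U$ and $V$ are circulant-like. I would then verify three things.
\begin{enumerate}
\item $W_0\succeq 0$, either by an explicit factorization $W_0=X^TX$ or by checking that the eigenvalues of $U-V$ and $U+V$ are nonnegative. Since $U,V$ are symmetric, the block structure makes $W_0$ PSD iff $U\pm V\succeq 0$ when $U$ and $V$ commute, which is worth arranging.
\item $\tr(U M_{A_0})\ne 0$.
\item $\tr(V M_{A_0,B_0})/\tr(U M_{A_0})\ge 9/7$, by computing the $8$-element LCS values directly.
\end{enumerate}

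The main obstacle is finding the configuration itself and rounding the dual to an exactly PSD matrix with a clean value. I would first try highly symmetric $A_0$, such as orbits under a cyclic group of relabelings, so that all matrices are simultaneously diagonalizable. Then PSDness reduces to checking a few scalar inequalities, and the ratio can be read off in closed form.
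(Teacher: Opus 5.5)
Your proposal describes a search strategy, not a proof, and that is a genuine gap here. The proposition is purely existential, so its entire content is an explicit certificate. The plan never produces $A_0$, $B_0$ or $W_0$. The step ``solve the dual SDP and keep a configuration with value above $9/7$'' assumes the conclusion: nothing in your argument shows that any family in $\mathbb{S}_8$ admits a witness of value at least $9/7$, or even above $1$. Until a concrete triple is written down and its $\lcs$ values are checked, the statement is not established.

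What closes the gap is a concrete instance, and it can be much simpler than rounding numerical dual multipliers. The paper takes $U=V=J_4$. Then $W_0=\begin{pmatrix}1&-1\\-1&1\end{pmatrix}\otimes J_4\succeq 0$ holds trivially, since $U-V=0$ and $U+V=2J_4$. The value then reduces to the sum of all cross $\lcs$ values divided by the sum of all within-$A_0$ $\lcs$ values, diagonal included. So you need four permutations that are pairwise far from each other, but whose relabeled copies are close to them.

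The paper's choice is $A_0=\{\id,\,(4,3,2,1,8,7,6,5),\,(6,5,8,7,2,1,4,3),\,(7,8,5,6,3,4,1,2)\}$. This set is a Klein four-group, and its elements have pairwise $\lcs$ equal to $2$. It then sets $B_0=\tau A_0$ with $\tau=(2,7,6,3,4,5,8,1)$. Each row of the cross matrix has two entries equal to $5$ and two equal to $4$ (before the factor $1/8$). The totals are therefore $72$ across versus $4\cdot 8+12\cdot 2=56$ within, which gives exactly $9/7$.

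Your choice of $B_0$ as a left relabeling, which guarantees $M_{A_0}=M_{B_0}$, and your suggestion to use group orbits are in the right spirit. Still, the instance itself is the proof.

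Two smaller remarks:
\begin{itemize}
\item For any symmetric $U,V$, conjugating by $\frac{1}{\sqrt2}\begin{pmatrix}I&I\\I&-I\end{pmatrix}$ turns $W_0$ into $\operatorname{diag}(U-V,\,U+V)$. So $W_0\succeq 0$ if and only if $U\pm V\succeq 0$, and no commutation hypothesis is needed.
\item Witnesses in the paper's sense are only a restricted class of dual certificates for the PSD relaxation. The general dual allows arbitrary nonnegative $Y,Z$ with $Y-Z\succeq 0$ and bounds $\Delta$ below by the ratio of $\tr(ZM)$ to $\tr(YM)$. Maximizing $\wvalue$ is therefore not literally that dual, and a numerically optimal dual solution need not have the witness block form. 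This does not matter for a search, but you would have to project onto the witness form before certifying.
\end{itemize}
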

\begin{proof}

Let $A_0 = \{A_1, \ldots, A_4\}$ and $B_0 = \{ B_1, \ldots, B_4 \}$,
where $\tau=(1\,2\,7\,8)(3\,6\,5\,4) =(2,7,6,3,4,5,8,1)$
and $B_i=\tau A_i$:
\[
\begin{aligned}
A_1 = (1,2,3,4,5,6,7,8), \qquad\qquad B_1=(2,7,6,3,4,5,8,1),\\
A_2 = (4,3,2,1,8,7,6,5), \qquad\qquad B_2=(3,6,7,2,1,8,5,4),\\
A_3 = (6,5,8,7,2,1,4,3), \qquad\qquad B_3=(5,4,1,8,7,2,3,6),\\
A_4 = (7,8,5,6,3,4,1,2), \qquad\qquad B_4=(8,1,4,5,6,3,2,7).
\end{aligned}
\]
Since $B_0$ is obtained from $A_0$ by relabeling, we have $M_{A_0}=M_{B_0}$.
A direct computation gives
\[
M_{A_0}=\frac18
\begin{pmatrix}
8&2&2&2\\
2&8&2&2\\
2&2&8&2\\
2&2&2&8
\end{pmatrix},
\qquad
M_{A_0,B_0}=\frac18
\begin{pmatrix}
5&4&4&5\\
4&5&5&4\\
4&5&5&4\\
5&4&4&5
\end{pmatrix}.
\]
The matrix $W_0$ for this instance is given by $U = V = J_4$ (where $J_4$ is the all-ones $4 \times 4$ matrix). First, 
\[
W 
= \begin{pmatrix} U & -V \\ -V^T & U \end{pmatrix}
= \left( 
\begin{array}{cc}
J_4 & -J_4 \\
-J_4 & J_4
\end{array}
\right) 
= \left( 
\begin{array}{cc}
1 & -1 \\
-1 & 1
\end{array}
\right) \otimes J_4 \succeq 0.
\]

Next, it is easy to calculate
\[
\operatorname{tr}(V \cdot M_{A_0,B_0})=\frac{1}{8}(18 \cdot 4)=9,
\qquad
\operatorname{tr}(U \cdot M_{A_0})=\frac{1}{8}(14\cdot 4)=7.
\]
Hence, $\wvalue(A_0, B_0, W_0) = 9/7$.
\end{proof}

\begin{lemma}\label{lem:ulam-lb-non-zero}
    There exists a constant $\delta>0$ such that, for any $n_0>0$, there exists $n\geq n_0$ such that the LSH distortion of the Ulam similarity $\ulam$ on $\mathbb{S}_n$ is at least $\Omega(n^\delta)$.
\end{lemma}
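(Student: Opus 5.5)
Starting from the base witness $(A_0,B_0,W_0)$ of \Cref{prop:ulam-small-instance}, which has size $n_0'=8$ and value $\wvalue_0\ge 9/7$, we will iterate \Cref{lem:ulam-boosting}. Define $(A_{k+1},B_{k+1},W_{k+1})$ as the boosted witness obtained from $(A_k,B_k,W_k)$ via $A_{k+1}=A_k\wr A_k$, $B_{k+1}=B_k\wr B_k$, with $W_{k+1}$ built from the Kronecker squares of the blocks. One point needs checking before the lemma applies: the witness definition requires $A_{k+1}$ and $B_{k+1}$ to be disjoint. This holds because the wreath map $(\pi,\sigma)\mapsto\pi\wr\sigma$ is injective, and $A_k\cap B_k=\emptyset$ means any $\pi\wr\sigma\in A_{k+1}$ has outer permutation in $A_k$, whereas elements of $B_{k+1}$ have outer permutation in $B_k$. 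Also $|A_{k+1}|=|A_k|^2=|B_{k+1}|$. By induction, $(A_k,B_k,W_k)$ is then a witness of size $n_k=8^{2^k}$ with value at least $(9/7)^{2^k}$.

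We then convert this to a power of $n_k$. Writing $2^k=\log_8 n_k$, we get
\[
(9/7)^{2^k}=(9/7)^{\log_8 n_k}=n_k^{\log_8(9/7)}=n_k^{\ln(9/7)/\ln 8}.
\]
Since $\ln(9/7)\approx 0.2513$ and $\ln 8\approx 2.0794$, this gives $\delta=\ln(9/7)/\ln 8\approx 0.1209\ge 0.12>0$.

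Finally, \Cref{lem:witness-implies-ulam-lb} says the LSH distortion of $(\mathbb{S}_{n_k},\ulam)$ is at least $\wvalue(A_k,B_k,W_k)\ge n_k^\delta$. Since $n_k\to\infty$, for any $n_0$ we can choose $k$ with $n_k\ge n_0$, which proves the lemma (and in fact \Cref{thm:ulam-lower-bound}). The only step requiring care is the disjointness and well-definedness verification in the induction; everything else is a direct application of earlier results.
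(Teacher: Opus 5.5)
Your proposal is correct and follows the paper's proof: iterate \Cref{lem:ulam-boosting} $k$ times from the base witness of \Cref{prop:ulam-small-instance} to get a witness of size $n=8^{2^k}$ and value $(9/7)^{2^k}=n^{\log_8(9/7)}$, then apply \Cref{lem:witness-implies-ulam-lb}. You also explicitly check that $A_k\wr A_k$ and $B_k\wr B_k$ stay disjoint, using injectivity of $(\pi,\sigma)\mapsto\pi\wr\sigma$ together with $A_k\cap B_k=\emptyset$. The paper leaves this implicit in \Cref{lem:ulam-boosting}, and your argument for it is correct.
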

\begin{proof}
    Let $\Delta_0=9/7$ and $m=8$. Let $k$ be the smallest integer such that $m^{2^k} \geq n_0$, we set $n=m^{2^k}$. Note that $k=\log_2 \log_m (n)$.

    Starting from the witness of \Cref{prop:ulam-small-instance} and applying \Cref{lem:ulam-boosting} $k$ times we obtain a witness of size $m^{2^k}=n$ and value $\geq \Delta_0^{2^k}$. By \Cref{lem:witness-implies-ulam-lb}, this implies that the LSH distortion on $\mathbb{S}_n$ is at least: 
    \[
        \Delta_0^{2^k} = \Delta_0^{\log_m(n)}=n^{\log_m \Delta_0}\geq n^{0.120856}=\Omega(n^\delta),
    \]
    for $\delta = 0.12$.
\end{proof}

Note that \Cref{lem:ulam-lb-non-zero} directly implies \Cref{thm:ulam-lower-bound}. The proof is elementary.

\section{Upper Bound for Cayley Similarity}\label{sec:cayley-upper-bound}

In this section, we show that the LSH distortion for Cayley similarity is no more than $n$. 

\begin{theorem}\label{thm:cayley-upper-bound}
    The LSH distortion of $(\mathbb{S}_n, \cayley)$ is at most $n$.
\end{theorem}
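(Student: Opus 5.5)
The plan is to use the simplest possible LSH scheme, the \emph{identity hash}. Let $\cH$ be the point mass on the function $h(\pi)=\pi$. Then $P_{\cH}(\pi,\sigma)=\mathbf{1}_{[\pi=\sigma]}$. We check the two inequalities of \Cref{def:LSH-distortion} with $\Delta=n$.

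\emph{Diagonal.} If $\pi=\sigma$, then $\pi^{-1}\sigma=\id$ has $n$ cycles. So $\cayley(\pi,\pi)=1=P_{\cH}(\pi,\pi)$, and both $\frac1n\cdot 1\le 1\le 1$ hold.

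\emph{Off the diagonal.} If $\pi\neq\sigma$, then $P_{\cH}(\pi,\sigma)=0\le \cayley(\pi,\sigma)$, so the upper inequality is fine. The lower inequality would require $\frac1n\cayley(\pi,\sigma)\le 0$. This is false, since $\cayley(\pi,\sigma)\ge 1/n>0$ because every permutation has at least one cycle. So the identity hash alone fails, and the off-diagonal lower bound is the step that needs an actual construction.

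The fix is to mix in a scheme whose off-diagonal collision probability is about $1/n$ times the cycle count. First I use the natural randomized hash. Pick a uniform $a\in[n]$ and set $h_a(\pi)=\pi(a)$, or more conveniently $h_a(\pi)=\pi^{-1}(a)$. Then $\Pr[h_a(\pi)=h_a(\sigma)]=\fix(\pi^{-1}\sigma)/n$ (up to the choice of inverse convention). This is at most $c(\pi^{-1}\sigma)/n=\cayley(\pi,\sigma)$, since every fixed point is a cycle, but it can be $0$ for derangements.

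To get a strictly positive lower bound on every pair I would use the $1/n$ floor trick from the Ulam section (compare \eqref{eq:ulam-ub-H-bounds-1}). Take the hash that, with a uniformly chosen $a\in[n]$, outputs $1$ with probability $1/n$ and outputs $\pi$ otherwise. More concretely, choose $a\in[n]$ uniformly and set $h_a(\pi)=1$ if $a=1$, and $h_a(\pi)=\pi$ otherwise. For this scheme, $P(\pi,\sigma)=1/n$ whenever $\pi\neq\sigma$, and $P(\pi,\pi)=1$.

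It remains to verify this scheme against $\cayley$ with $\Delta=n$.
\begin{itemize}
\item Upper bound: $1/n\le c(\pi^{-1}\sigma)/n$, which holds because $c\ge1$.
\item Lower bound: $\frac1n\cdot\frac{c(\pi^{-1}\sigma)}{n}\le \frac1n$, which holds because $c\le n$.
\end{itemize}
The diagonal case is exact. This proves the distortion is at most $n$.

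The only real obstacle was noticing that the lower inequality needs a positive off-diagonal collision probability, which the pure identity hash lacks. The $1/n$ floor scheme already settles it, so no cycle-structure hashing is needed.
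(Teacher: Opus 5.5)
Your final scheme is correct and essentially the paper's argument: it induces the same collision kernel, namely $1$ on the diagonal and $1/n$ off it, and the check via $1 \le c(\pi^{-1}\sigma) \le n$ is the same. The only difference is the realization: the paper uses a uniformly random function $h:\mathbb{S}_n\to[n]$ instead of your ``collapse everything with probability $1/n$'' hash, and your earlier detours through the identity and fixed-point hashes are unnecessary but harmless.
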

\begin{proof}
    Let $\cH$ be the uniform at random distribution over the functions $h:\mathbb{S}_n \rightarrow [n]$. Clearly, for each $\pi,\sigma\in \mathbb{S}_n$, $\pi\neq \sigma$:
    \[
        \Pr_{h\sim \cH}[h(\pi)=h(\sigma)] = \frac{1}{n}.
    \]
    Since $\frac{1}{n}\leq \cayley(\pi,\sigma)\leq 1$ for all $\pi,\sigma\in \mathbb{S}_n$, we have:
    \[
        \frac{1}{n} \cdot \cayley(\pi,\sigma) \leq \Pr_{h\sim \cH}[h(\pi)=h(\sigma)] \leq \cayley(\pi,\sigma).\qedhere
    \]
\end{proof}

\section{Lower Bound for Cayley Similarity}\label{sec:cayley-lower-bound}

In this section, we show the following result:

\begin{restatable}{theorem}{CayleyLowerBound}\label{thm:cayley-lower-bound}
  The LSH distortion of $(\mathbb{S}_n,\cayley)$ is $\Omega(n)$.
\end{restatable}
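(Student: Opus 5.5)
Assume a distribution $\cH$ that $\Delta$-approximates $(\mathbb{S}_n,\cayley)$. The first move is to reduce to a bi-invariant kernel by symmetrization. Draw $g_1,g_2\in\mathbb{S}_n$ uniformly at random and use the hash $\pi\mapsto h(g_1\pi g_2)$. The Cayley similarity $\cayley(\pi,\sigma)=\cycle(\pi^{-1}\sigma)/n$ is invariant under $\pi\mapsto g_1\pi g_2$, because $(g_1\pi g_2)^{-1}g_1\sigma g_2=g_2^{-1}\pi^{-1}\sigma g_2$ is conjugate to $\pi^{-1}\sigma$. So the averaged scheme still $\Delta$-approximates $\cayley$. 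Its collision kernel has the form $P(\pi,\sigma)=f(\pi^{-1}\sigma)$, where $f$ is a class function with $f(\id)=1$. The matrix $P$ is PSD by \Cref{lem:psd-Pa}, so $f$ is a positive-definite class function on $\mathbb{S}_n$. By the standard representation-theoretic fact recalled in \Cref{app:background-representation-theory}, we can write $f=\sum_{\lambda\vdash n} p_\lambda\, \chi_\lambda/d_\lambda$ with $p_\lambda\ge 0$ and $\sum_\lambda p_\lambda = f(\id)=1$. Here $d_\lambda=\chi_\lambda(\id)$, and $\chi_\lambda/d_\lambda$ is the normalized character.

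Next, choose two derangements. Let $\alpha$ be a derangement with about $n/2$ cycles, e.g.\ a product of $n/2$ disjoint transpositions (or $3$-cycles, adjusting parity as needed). Let $\beta$ be a derangement with $O(1)$ cycles, e.g.\ an $n$-cycle. Then $\cayley(\id,\alpha)=\Theta(1)$ and $\cayley(\id,\beta)=O(1/n)$. The approximation guarantee gives
\[
f(\alpha)\ge \cayley(\id,\alpha)/\Delta=\Omega(1/\Delta), \qquad f(\beta)\le \cayley(\id,\beta)=O(1/n).
\]
It therefore suffices to show $f(\alpha)\le f(\beta)+O(1/n)$, because then $\Omega(1/\Delta)\le O(1/n)$, which gives $\Delta=\Omega(n)$.

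To compare $f(\alpha)$ and $f(\beta)$, split the partitions $\lambda$ into two groups. The first group consists of ``extreme'' $\lambda$ whose first row or first column has length at least $n-c$ for a constant $c$. For these, $\chi_\lambda$ on a derangement is determined by explicit formulas in terms of cycle counts of small lengths. For example, $\chi_{(n-1,1)}(\pi)=\fix(\pi)-1$, which equals $-1$ on both $\alpha$ and $\beta$, while $d_{(n-1,1)}=n-1$. The plan is to choose $\alpha$ and $\beta$ so that these low-order characters nearly agree on them; this is the meaning of ``sharing the high-order terms''. Since both permutations are derangements, $\fix=0$ for both. If a term involving $\twocycle$ still differs, replace the transpositions in $\alpha$ by $3$-cycles, so that both permutations have $\twocycle=0$ and $\fix=0$. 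Then the characters for $\lambda$ with $n-\lambda_1\le 2$ coincide on $\alpha$ and $\beta$, and likewise for the transposed partitions, since only the sign changes. The sign must also agree, which a parity adjustment of $\alpha$ handles. For the second group, the ``bulk'' $\lambda$ with $\lambda_1,\lambda'_1\le n-3$, apply Roichman's estimate $|\chi_\lambda(\pi)/d_\lambda|\le (\max(q,\lambda_1/n,\lambda'_1/n))^{b\,\supp(\pi)}$. For derangements $\supp(\pi)=n$, so for all these $\lambda$ both normalized characters are bounded by $O(1/n)$ in absolute value; if necessary, push the cutoff $c$ to a larger constant so that the bound is at most $1/n$. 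Since the $p_\lambda$ sum to $1$, we get
\[
|f(\alpha)-f(\beta)|\le \sum_{\text{bulk}}p_\lambda\cdot O(1/n)+\sum_{\text{extreme}}p_\lambda\,|\chi_\lambda(\alpha)-\chi_\lambda(\beta)|/d_\lambda = O(1/n).
\]

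The main obstacle is this balance in the last step. Roichman's bound gives decay $r^{bn}$ with $r=\max(q,\lambda_1/n,\lambda_1'/n)$, and this is only $O(1/n)$ once $\lambda_1\le n-c$ for a sufficiently large constant $c$. Consequently the extreme range must cover all $\lambda$ within constant distance of a hook-end, and for every such $\lambda$ the characters must agree exactly on $\alpha$ and $\beta$, or differ by $O(1/n)$ after normalization. I would handle this using the fact that $\chi_\lambda$ for $\lambda=(n-k,\mu)$ is a polynomial in the counts of cycles of length at most $k$. Choosing $\alpha$ to consist only of cycles of length greater than $c$ (about $n/c$ of them, which still gives $\cayley(\id,\alpha)=\Theta(1)$ for constant $c$) and $\beta$ an $n$-cycle, with parities matched, makes all these polynomials coincide on $\alpha$ and $\beta$.
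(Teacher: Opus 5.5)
Your skeleton matches the paper's: symmetrize to a bi-invariant kernel, expand it as a convex combination of normalized characters, compare a derangement with $\Theta(n)$ cycles to one with $O(1)$ cycles, and use Roichman for the bulk. However, your bulk estimate is wrong as stated, and that is exactly where the difficulty sits.

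On a derangement with $n-\lambda_1=c$, Roichman gives $|\chi_\lambda/d_\lambda|\le (1-c/n)^{\varphi n}\approx e^{-\varphi c}$. For constant $c$ this is a constant, not $O(1/n)$, and no constant cutoff makes it at most $1/n$ for large $n$. Roichman reaches $1/n$ only once $n-\lambda_1\ge (\ln n)/\varphi$. So the range $3\le n-\lambda_1<(\ln n)/\varphi$, and its transposes, is left uncovered.

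Absorbing that range into your ``extreme'' group via character polynomials forces $c=\Theta(\log n)$, and forces $\alpha$ to avoid all cycles of length at most $c$. Then $\alpha$ has $O(n/\log n)$ cycles, $\cayley(\id,\alpha)=O(1/\log n)$, and the argument only yields $\Delta=\Omega(n/\log n)$. Optimizing the cutoff does not help: you get $\Delta\ge \frac{1/(c+1)}{O(1/n)+e^{-\varphi c}}$, which is at best $\Omega(n/\log n)$.

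The missing ingredient is an $O(1/n)$ bound on $|\chi_\lambda(\sigma)/d_\lambda|$ for \emph{every} derangement $\sigma$ in that intermediate range, so that exact agreement is never needed. The paper gets it by combining two estimates:
\begin{itemize}
\item the Larsen--Shalev bound $|\chi_\lambda(\sigma)/d_\lambda|\le d_\lambda^{-1/2+o(1)}$ for derangements;
\item the hook-length bound $d_\lambda\ge \left(n/(2(n-\lambda_1))\right)^{n-\lambda_1}$, valid for $\lambda_1\ge \frac{11}{12}n$.
\end{itemize}
Together these give $\left(O(\log n)/n\right)^{3/2-o(1)}=O(1/n)$ whenever $3\le n-\lambda_1<(\ln n)/\varphi$. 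The cases $n-\lambda_1\in\{1,2\}$ follow from explicit character formulas, much as in your treatment.

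Once all non-trivial, non-sign normalized characters are $O(1/n)$ on all derangements, only the signs of the two derangements must match. So $\alpha$ can be a product of about $n/2$ transpositions, and $\Delta=\Omega(n)$ follows.
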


The proof will rely on some standard results in the representation theory of $\mathbb{S}_n$. For convenience, we review these standard results in \Cref{app:background-representation-theory}. %

We begin with the following definition:
\begin{definition}[Bi-invariance]
    We say that a function $f:\mathbb{S}_n\times \mathbb{S}_n \to \mathbb{R} $ is \emph{bi-invariant} if:
    \[
        f(\pi, \sigma) = f(\tau_1\pi\tau_2, \tau_1\sigma\tau_2),
    \]
    for any $\pi, \sigma,\tau_1,\tau_2 \in \mathbb{S}_n$. We say a bi-invariant function is a \emph{bi-invariant PSD kernel} if, in addition, it satisfies:
    \[
        \sum_{\pi,\sigma \in \mathbb{S}_n} a_\sigma a_\pi f(\pi, \sigma) > 0,
    \]
    for every $a \in \mathbb{R}^{\mathbb{S}_n}\setminus\{0\}$. We also require that $f(\id,\id)=1$ for $f$ to be a PSD kernel.
\end{definition}

Recall that a \emph{class function} on $\mathbb{S}_n$ is a function that is constant on each conjugacy class of $\mathbb{S}_n$. We recall the following property of bi-invariant functions. (Note that equivalent results are known in the literature, and hold for more general groups--see, e.g.\ \cite[Chapter 4.5]{k08}--but we include a proof anyway for completeness).
\begin{lemma}\label{lem:existence-of-class-function}
    A function $f:\mathbb{S}_n \times \mathbb{S}_n \to \mathbb{R}$ is bi-invariant if and only if there exists a class function $g:\mathbb{S}_n \to \mathbb{R}$ such that:
    \[
        f(\sigma, \pi) = g(\sigma\pi^{-1}),
    \]
    for every $\sigma, \pi \in \mathbb{S}_n$.
\end{lemma}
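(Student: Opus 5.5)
The plan is to prove the two directions separately, using only the group structure of $\mathbb{S}_n$. The natural candidate is $g(\rho) := f(\rho, \id)$, so everything reduces to checking that $f$ is determined by $\sigma\pi^{-1}$ and that this $g$ is constant on conjugacy classes.

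\emph{Forward direction.} Assume $f$ is bi-invariant and define $g(\rho) := f(\rho,\id)$.
\begin{itemize}
\item Fix $\sigma,\pi$ and apply bi-invariance with $\tau_1 = \id$ and $\tau_2 = \pi^{-1}$. This gives $f(\sigma,\pi) = f(\sigma\pi^{-1}, \pi\pi^{-1}) = f(\sigma\pi^{-1},\id) = g(\sigma\pi^{-1})$, which is the required identity.
\item To see that $g$ is a class function, take any $\rho,\tau$ and use $\tau_1 = \tau$, $\tau_2 = \tau^{-1}$. Then $g(\tau\rho\tau^{-1}) = f(\tau\rho\tau^{-1}, \tau\,\id\,\tau^{-1}) = f(\rho,\id) = g(\rho)$.
\end{itemize}

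\emph{Converse direction.} Assume $f(\sigma,\pi) = g(\sigma\pi^{-1})$ with $g$ a class function. For any $\tau_1,\tau_2$ we compute
\[
f(\tau_1\sigma\tau_2, \tau_1\pi\tau_2) = g\bigl(\tau_1\sigma\tau_2\tau_2^{-1}\pi^{-1}\tau_1^{-1}\bigr) = g\bigl(\tau_1(\sigma\pi^{-1})\tau_1^{-1}\bigr) = g(\sigma\pi^{-1}) = f(\sigma,\pi),
\]
where the third equality uses that $g$ is a class function. Hence $f$ is bi-invariant.

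There is no real obstacle here. The only care needed is the order of multiplication, so that $\tau_2$ cancels on the right and $\tau_1$ appears as a conjugation.
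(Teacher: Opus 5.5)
Your proposal is correct and follows essentially the same route as the paper's proof. Both use the choice $g(\rho)=f(\rho,\id)$, the same substitutions in the forward direction, and the same cancellation-plus-conjugation argument for the converse. The only difference is that the paper verifies invariance under right and left multiplication separately, while you handle both in one computation.
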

\begin{proof}
    Let $f:\mathbb{S}_n \times \mathbb{S}_n \to \mathbb{R}$ be a bi-invariant function. Define $g:\mathbb{S}_n \to \mathbb{R}$ as:
    \[
        g(\pi) := f(\pi,\id).
    \]
    Then we have:
    \[
        f(\pi, \sigma) =f(\pi\sigma^{-1}, \sigma \sigma^{-1}) =f(\pi\sigma^{-1}, \id) = g(\pi\sigma^{-1})
    \]
    for any $\pi,\sigma \in \mathbb{S}_n$. Moreover, $g$ is a class function, since, for any $\pi,\sigma \in \mathbb{S}_n$:
    \[
        g(\sigma \pi \sigma^{-1}) = f(\sigma \pi \sigma^{-1},\id) = f(\pi, \sigma^{-1}\sigma) = f(\pi,\id) = g(\pi),
    \]
    where the second equality follows from the bi-invariance of $f$. 

    On the other hand, suppose that $f:\mathbb{S}_n \times \mathbb{S}_n \to \mathbb{R}$ is a function such that $f(\pi,\sigma) = g(\pi \sigma^{-1})$ for some class function $g$, then, for any $\pi, \sigma, \tau \in \mathbb{S}_n$:
    \[
        f(\pi\tau,\sigma\tau) = g(\pi \tau \tau^{-1} \sigma^{-1}) = g(\pi \sigma^{-1}) = f(\pi,\sigma),
    \]
    and:
    \[
        f(\tau\pi,\tau\sigma) = g( \tau \pi\sigma^{-1}\tau^{-1} ) = g(\pi \sigma^{-1}) = f(\pi,\sigma),
    \]
    where the second equality uses the fact that $g$ is a class function. Hence $f$ is bi-invariant.
\end{proof}

A crucial property of the Cayley similarity is that it is bi-invariant, as we now show.
\begin{lemma}\label{lem:cayley-is-bi-invariant}
    $\cayley(\cdot,\cdot)$ is bi-invariant.
\end{lemma}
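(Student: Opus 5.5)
The plan is to use \Cref{lem:existence-of-class-function} in the ``if'' direction. It then suffices to show that $\cayley(\pi,\sigma)=g(\pi\sigma^{-1})$ for some class function $g$. The natural choice is $g(\rho):=\cycle(\rho)/n$, which is a class function because, as noted in \Cref{sec:notation}, $\cycle(\sigma\pi\sigma^{-1})=\cycle(\pi)$ for all $\sigma,\pi\in\mathbb{S}_n$.

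The key step is to reconcile this with the definition $\cayley(\pi,\sigma)=\cycle(\pi^{-1}\sigma)/n$, which is written with $\pi^{-1}\sigma$ rather than $\pi\sigma^{-1}$. I will show that $\cycle(\pi^{-1}\sigma)=\cycle(\pi\sigma^{-1})$; this is also stated in the definition of $\dcayley$. To verify it, first write
\[
\pi^{-1}\sigma=\pi^{-1}(\sigma\pi^{-1})\pi ,
\]
so $\pi^{-1}\sigma$ is conjugate to $\sigma\pi^{-1}$. Next, $\sigma\pi^{-1}=(\pi\sigma^{-1})^{-1}$. Combining these with invariance of $\cycle$ under conjugation and inversion gives $\cycle(\pi^{-1}\sigma)=\cycle(\sigma\pi^{-1})=\cycle(\pi\sigma^{-1})$. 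Hence $\cayley(\pi,\sigma)=g(\pi\sigma^{-1})$, and \Cref{lem:existence-of-class-function} yields bi-invariance.

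As a sanity check, the conclusion can also be verified directly. For any $\tau_1,\tau_2$ we have
\[
(\tau_1\pi\tau_2)^{-1}(\tau_1\sigma\tau_2)=\tau_2^{-1}(\pi^{-1}\sigma)\tau_2 ,
\]
and this has the same cycle count as $\pi^{-1}\sigma$. There is no real obstacle here. The only point needing care is the mismatch between the $\pi^{-1}\sigma$ in the definition and the $\pi\sigma^{-1}$ form required by the lemma, and the conjugation-and-inversion argument above settles it.
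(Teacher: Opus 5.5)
Your proposal is correct and matches the paper's proof: both take $g(\rho)=\cycle(\rho)/n$ as the class function and apply the ``if'' direction of \Cref{lem:existence-of-class-function}. The only difference is that you explicitly verify $\cycle(\pi^{-1}\sigma)=\cycle(\pi\sigma^{-1})$ by conjugation and inversion, which the paper takes directly from its definition of $\dcayley$; your direct check via $\tau_2^{-1}(\pi^{-1}\sigma)\tau_2$ is also valid and self-contained.
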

\begin{proof}
    We have that, for any choice of $\pi$ and $\sigma$, $\cayley(\pi,\sigma)={c(\pi\sigma^{-1})\over n}$. Since any two conjugate permutations have the same cycle type, the function $\pi \mapsto {c(\pi)\over n}$ is a class function, and the result follows from \Cref{lem:existence-of-class-function}.
\end{proof}

\begin{definition}[Bi-invariant Distribution over Functions]
    We will say that a distribution $\cH$ over functions from $\mathbb{S}_n$ into some set $\mathcal{Y}$ is \emph{bi-invariant} if the function $P_\mathcal{H}(\cdot,\cdot)$ is bi-invariant.
\end{definition}

We now show that when studying the minimum achievable distortion for $(\mathbb{S}_n, \cayley)$, any LSH family can be assumed to be bi-invariant.
\begin{lemma}\label{lem:biinv}
Suppose there exists a distribution $\cH$ over functions from $\mathbb{S}_n$ into a finite set $\mathcal{Y}$ that $\Delta$-approximates $(\mathbb{S}_n, \cayley)$, then there exists a bi-invariant distribution $\cH'$ that $\Delta$-approximates $(\mathbb{S}_n, \cayley)$. 
\end{lemma}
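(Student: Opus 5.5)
The plan is a symmetrization argument: average $\cH$ over all left and right translations of $\mathbb{S}_n$. Define $\cH'$ as follows. Sample $h\sim\cH$, and independently sample $\tau_1,\tau_2$ uniformly from $\mathbb{S}_n$. Output the function
\[
h'(\pi) := h(\tau_1\pi\tau_2).
\]
This is again a distribution over functions from $\mathbb{S}_n$ into the finite set $\mathcal{Y}$. It is finitely supported, since $\cH$ can be taken finitely supported; alternatively it is simply a mixture of $|\mathbb{S}_n|^2$ pushforwards of $\cH$. Its collision probability is
\[
P_{\cH'}(\pi,\sigma)=\frac{1}{(n!)^2}\sum_{\tau_1,\tau_2\in\mathbb{S}_n} P_{\cH}(\tau_1\pi\tau_2,\tau_1\sigma\tau_2),
\]
which follows by conditioning on $(\tau_1,\tau_2)$.

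First, I will check bi-invariance. For any $\rho_1,\rho_2$, the maps $\tau_1\mapsto\tau_1\rho_1$ and $\tau_2\mapsto\rho_2\tau_2$ are bijections of $\mathbb{S}_n$. Reindexing the sum with them gives $P_{\cH'}(\rho_1\pi\rho_2,\rho_1\sigma\rho_2)=P_{\cH'}(\pi,\sigma)$.

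Second, I will check that $\cH'$ still $\Delta$-approximates $(\mathbb{S}_n,\cayley)$. Each term of the sum satisfies
\[
\tfrac1\Delta\,\cayley(\tau_1\pi\tau_2,\tau_1\sigma\tau_2)\le P_{\cH}(\tau_1\pi\tau_2,\tau_1\sigma\tau_2)\le \cayley(\tau_1\pi\tau_2,\tau_1\sigma\tau_2),
\]
because $\cH$ $\Delta$-approximates $\cayley$. By \Cref{lem:cayley-is-bi-invariant}, the Cayley value in each term equals $\cayley(\pi,\sigma)$. So every term lies in $[\cayley(\pi,\sigma)/\Delta,\cayley(\pi,\sigma)]$, and so does their average $P_{\cH'}(\pi,\sigma)$.

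There is no real obstacle here. The only point needing care is to state precisely that the mixture $\cH'$ is a legitimate distribution over functions into a finite set, so that $P_{\cH'}$ is the average of the translated collision kernels. This is immediate because the mixture is finite.
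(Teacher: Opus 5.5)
Your proposal is correct and follows essentially the same route as the paper. The paper also sets $h'(\pi)=h(\alpha\pi\beta)$ with $\alpha,\beta$ uniform and independent, uses \Cref{lem:cayley-is-bi-invariant} to show that both approximation bounds survive averaging, and gets bi-invariance of $P_{\cH'}$ by reindexing $\alpha\tau_1$ and $\tau_2\beta$.
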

\begin{proof}
We construct $\mathcal{H}'$ by pre-composing the elements of $\cH$ with random left and right permutations: define a function $h' \sim \mathcal{H}'$ as $h'(\pi) = h(\alpha \pi \beta)$, where $h$ is drawn from $\mathcal{H}$ and $\alpha, \beta$ are drawn uniformly and independently from $\mathbb{S}_n$. We first show that $\cH'$ $\Delta$-approximates $(\mathbb{S}_n, \cayley)$. For every choice of $\pi,\sigma\in\mathbb{S}_n$, we have:
\begin{align*}
\Pr_{h'\sim\mathcal{H}'}\left[h'(\pi)= h'(\sigma)\right] &= \E{\alpha, \beta \sim \mathbb{S}_n}{\Pr_{h \sim\mathcal{H}}[h(\alpha \pi \beta) = h(\alpha \sigma \beta)]} \\
&\le \E{\alpha, \beta \sim \mathbb{S}_n}{\cayley(\alpha \pi \beta, \alpha \sigma \beta)} \\
&\overset{\Cref{lem:cayley-is-bi-invariant}}{=} \E{\alpha, \beta \sim \mathbb{S}_n}{\cayley(\pi, \sigma)} = \cayley(\pi,\sigma),
\end{align*}
where the first inequality uses that $\cH$ $\Delta$-approximates $(\mathbb{S}_n,\cayley)$. Similarly:
\begin{align*}
    \Pr_{h'\sim\mathcal{H}'}\left[h'(\pi)= h'(\sigma)\right] &= \E{\alpha, \beta \sim \mathbb{S}_n}{\Pr_{h \sim\mathcal{H}}[h(\alpha \pi \beta) = h(\alpha \sigma \beta)]} \\
    &\ge \E{\alpha, \beta \sim \mathbb{S}_n}{{1\over \Delta}\cdot \cayley(\alpha \pi \beta, \alpha \sigma \beta)} \\&\overset{\Cref{lem:cayley-is-bi-invariant}}{=} \E{\alpha, \beta \sim \mathbb{S}_n}{{1\over \Delta} \cdot \cayley(\pi, \sigma)} = {1\over \Delta}\cayley(\pi,\sigma),
\end{align*}
Showing that indeed $\cH'$ $\Delta$-approximates $(\mathbb{S}_n,\cayley)$. Finally, we verify that $P_{\mathcal{H}'}$ is bi-invariant. For any $\pi,\sigma\in \mathbb{S}_n$ and any left multiplier $\tau_1$ and right multiplier $\tau_2$:
\[
    P_{\mathcal{H}'}(\tau_1 \pi \tau_2, \tau_1 \sigma \tau_2) = \mathbb{E}_{\alpha, \beta \sim \mathbb{S}_n}[P_{\mathcal{H}}(\alpha \tau_1 \pi \tau_2 \beta, \alpha \tau_1 \sigma \tau_2 \beta)].
\]
Because $\alpha$ and $\beta$ are drawn uniformly from the group, the products $\alpha' = \alpha \tau_1$ and $\beta' = \tau_2 \beta$ are also uniformly distributed. Thus, the expectation is identical to $\mathbb{E}_{\alpha', \beta'}[P_{\mathcal{H}}(\alpha' \pi \beta', \alpha' \sigma \beta')] = P_{\mathcal{H}'}(\pi, \sigma)$, completing the proof. 
\end{proof}

Note also that, by \Cref{lem:psd-Pa}, $P_{\cH'}$ is PSD and it is therefore a bi-invariant PSD kernel. 

Recall that each partition $\lambda\vdash n$ is associated with a distinct irreducible representation $S^{\lambda}$ of $\mathbb{S}_n$, we denote by $\chi_\lambda: \mathbb{S}_n \to \mathbb{C}$ its character, and by $d_\lambda\in\mathbb{N}$ its dimension. 

\begin{lemma}\label{lem:char}
Any bi-invariant PSD kernel $P$ on $\mathbb{S}_n$ can be written as a convex combination of normalized characters of irreducible representations of $\mathbb{S}_n$ as follows:
\[
   \forall \sigma,\pi\in \mathbb{S}_n: \quad  P(\sigma, \pi) = \sum_{\lambda :\lambda \vdash n} w_\lambda \frac{\chi_\lambda(\sigma\pi^{-1})}{d_\lambda}, 
\]
where $w_\lambda \ge 0$ and $\sum_\lambda  w_\lambda = 1$. 
\end{lemma}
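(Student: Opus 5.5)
By \Cref{lem:existence-of-class-function}, a bi-invariant $P$ can be written as $P(\sigma,\pi)=g(\sigma\pi^{-1})$ for some real class function $g$ on $\mathbb{S}_n$, with $g(\id)=P(\id,\id)=1$. The plan is to expand $g$ in the orthonormal basis of irreducible characters, $g=\sum_{\lambda\vdash n} c_\lambda \chi_\lambda$, where
\[
c_\lambda=\langle g,\chi_\lambda\rangle=\frac{1}{n!}\sum_{\tau\in\mathbb{S}_n} g(\tau)\chi_\lambda(\tau).
\]
The characters of $\mathbb{S}_n$ are real-valued, so no conjugation is needed. Set $w_\lambda:=c_\lambda d_\lambda$, so that $g=\sum_\lambda w_\lambda \chi_\lambda/d_\lambda$. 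Evaluating at $\id$ and using $\chi_\lambda(\id)=d_\lambda$ gives $\sum_\lambda w_\lambda=g(\id)=1$. What remains is to show $w_\lambda\ge 0$, i.e.\ $c_\lambda\ge 0$, and this is the only step that needs real work.

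To get nonnegativity, I will test the PSD condition against a well-chosen vector. Take $a_\pi:=\chi_\lambda(\pi)$, which is real. Positive semidefiniteness (the definition actually asks for strict positivity, but $\ge 0$ suffices) gives
\[
0\le \sum_{\sigma,\pi} \chi_\lambda(\sigma)\chi_\lambda(\pi)\, g(\sigma\pi^{-1}).
\]
Substitute $\sigma=\tau\pi$ and sum over $\tau,\pi$; this turns the right-hand side into
\[
\sum_{\tau} g(\tau)\sum_{\pi}\chi_\lambda(\tau\pi)\chi_\lambda(\pi).
\]
Since $\chi_\lambda(\pi)=\chi_\lambda(\pi^{-1})$, the inner sum is the convolution $\sum_\pi \chi_\lambda(\tau\pi)\chi_\lambda(\pi^{-1})$. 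The standard identity $\chi_\lambda * \chi_\lambda=\frac{n!}{d_\lambda}\chi_\lambda$ (Schur orthogonality, recalled in \Cref{app:background-representation-theory}) evaluates it to $\frac{n!}{d_\lambda}\chi_\lambda(\tau)$. Plugging back in yields
\[
0\le \frac{n!}{d_\lambda}\sum_\tau g(\tau)\chi_\lambda(\tau)=\frac{(n!)^2}{d_\lambda}\,c_\lambda,
\]
so $c_\lambda\ge 0$.

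The delicate point is the convolution identity, i.e.\ checking the order of multiplication and the normalization constant. To verify it I would use the matrix-coefficient form of Schur orthogonality,
\[
\sum_\pi \rho_{ij}(\pi)\rho_{kl}(\pi^{-1})=\frac{n!}{d_\lambda}\delta_{il}\delta_{jk},
\]
for a real orthogonal model $\rho$ of $S^\lambda$ (for instance Young's orthogonal form). Combining it with $\chi_\lambda(\tau\pi)=\sum_{i,j}\rho_{ij}(\tau)\rho_{ji}(\pi)$ gives the identity directly. Nonnegativity of $w_\lambda$ together with $\sum_\lambda w_\lambda=1$ completes the argument.
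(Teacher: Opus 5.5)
Your proof is correct and has the same outline as the paper's. You reduce to a class function via \Cref{lem:existence-of-class-function}, expand in irreducible characters, and get $\sum_\lambda w_\lambda=1$ from $\chi_\lambda(\id)=d_\lambda$.

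The difference is in the one nontrivial step, the nonnegativity of the coefficients. The paper obtains it by citing Kondor's Corollary 4.5.5. You prove it directly: you test the quadratic form against $a_\pi=\chi_\lambda(\pi)$ and evaluate it with the convolution identity $\chi_\lambda*\chi_\lambda=\frac{n!}{d_\lambda}\chi_\lambda$, which yields $0\le \frac{(n!)^2}{d_\lambda}c_\lambda$. The substitution $\sigma=\tau\pi$ and your matrix-coefficient check of the identity (with a real orthogonal model such as Young's orthogonal form) have the right order of multiplication and the right constant.

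Your route buys self-containedness. It also makes explicit that only the non-strict inequality $\ge 0$ is used, not the strict one in the paper's definition. This matters because non-strict positivity is all that \Cref{lem:psd-Pa} supplies for $P_{\mathcal{H}'}$. The paper's route is shorter, at the price of relying on an external result.

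One correction to your write-up: the appendix does not recall Schur orthogonality, the orthonormality of irreducible characters, or the convolution identity. It only states that the irreducible characters form a basis of the class functions. You should cite a standard source such as \cite{s01} for these facts instead.
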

\begin{proof}
    Let $\phi:\mathbb{S}_n \to \mathbb{R}$ be the class function satisfying: 
    \[
        \phi(\sigma\pi^{-1}) = P(\sigma, \pi),
    \]
    the existence of which is guaranteed by Lemma \ref{lem:existence-of-class-function}. By a result of \citet[Corollary 4.5.5\footnote{We remark that Kondor's result is stated in terms of what he refers to as ``Positive Definite'' kernels, which are, in reality, positive semidefinite kernels, as he himself explains earlier in the thesis.}]{k08}, since $\phi$ is PSD, we have:
    \begin{equation}\label{eq:phi-in-span-of-chi}
        \phi = \sum_{\lambda \vdash n} a_\lambda \cdot \chi_{\lambda},
    \end{equation}
    for some choice of $\{a_\lambda\}_{\lambda \vdash n}$, where each $a_\lambda \in \mathbb{R}_{\ge 0}$. 

    In particular:
    \[
        1 = P(\id,\id) = \phi(\id) = \sum_{\lambda \vdash n} a_\lambda \cdot \chi_{\lambda}(\id) = \sum_{\lambda \vdash n} a_\lambda \cdot d_\lambda,
    \]
    where we use that $\chi_\lambda(\id)=d_\lambda$. Letting $w_\lambda : = a_\lambda \cdot d_\lambda$, together with Equation \eqref{eq:phi-in-span-of-chi} completes the proof.
\end{proof}

For our asymptotic bounds, it is necessary to isolate the trivial representation $(n)$, where $\chi_{(n)}(\cdot) = 1$ and $d_{(n)} = 1$ and the sign representation $(1^n)$, where
$\chi_{(1^n)}(\sigma) = \sgn(\sigma)$ and $d_{(1^n)} = 1$.  Thus, it is helpful to
write 
\begin{equation}
\label{eq:char}
P(\id,\sigma) = w_0 + w_{\sgn} \cdot \sgn(\sigma) + \sum_{\lambda\ne(n), (1^n)}w_{\lambda}\frac{\chi_{\lambda}(\sigma)}{d_{\lambda}}.
\end{equation}

For a permutation $\sigma \in \bS_n$, the \emph{fixed-point} count is defined as $\fix(\sigma) = |\{i \in [n] \mid \sigma(i) = i\}|,$ i.e., the number of length-1 cycles.
A \emph{derangement} is a permutation $\sigma$ that moves all elements, i.e., $\fix(\sigma) = 0$. For any $\sigma \in \mathbb{S}_n$, we also define the value $\supp(\sigma)$ as $n- \fix(\sigma)$. We recall the following result.
\begin{theorem}[\cite{r96}]\label{thm:roichman}
There exist absolute constants $\varphi > 0$ and $0 < q < 1$ such that for any partition $\lambda = (\lambda_1, \dots , \lambda_k) \vdash n$, any $n \ge 4$ and any permutation $\sigma \in \mathbb{S}_n$:
\[
\left| \frac{\chi_\lambda(\sigma)}{d_\lambda} \right| \le \left( \max \left\{ \frac{\lambda_1}{n}, \frac{k}{n}, q \right\} \right)^{\varphi \cdot \supp(\sigma)}\hspace{-8mm}.
\]
\end{theorem}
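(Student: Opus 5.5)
The statement is a cited result of \cite{r96}, and the paper uses it as a black box. If I had to prove it, I would induct on the number of nontrivial cycles of $\sigma$, peeling them off with the Murnaghan--Nakayama rule and controlling dimension ratios through the hook length formula.

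\emph{Reduction step.} Write $\sigma = \gamma\sigma'$, where $\gamma$ is one nontrivial cycle of length $\ell\ge 2$ and $\sigma'$ acts on the remaining $n-\ell$ points. The Murnaghan--Nakayama rule gives
\[
\chi_\lambda(\sigma)=\sum_{h}(-1)^{\mathrm{ht}(h)}\chi_{\lambda\setminus h}(\sigma'),
\]
where the sum runs over rim hooks $h$ of length $\ell$ in $\lambda$. Dividing by $d_\lambda$ gives
\[
\frac{|\chi_\lambda(\sigma)|}{d_\lambda}\le \sum_h \frac{d_{\lambda\setminus h}}{d_\lambda}\cdot\frac{|\chi_{\lambda\setminus h}(\sigma')|}{d_{\lambda\setminus h}}.
\]
By induction, the last factor is at most $r'^{\,\varphi\,\supp(\sigma')}$, with $r'$ the analogous maximum for $\lambda\setminus h$ inside $\mathbb{S}_{n-\ell}$. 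It would therefore suffice to prove a one-step estimate: for every $\ell\ge 2$, $\sum_h d_{\lambda\setminus h}/d_\lambda\le r^{c\ell}$ with $r=\max\{\lambda_1/n,k/n,q\}$ and an absolute constant $c$. Fixed points are handled at the very end, since they cost nothing.

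\emph{One-step estimate.} I would compute $d_{\lambda\setminus h}/d_\lambda$ with the hook length formula. The ratio is the product of $n!/(n-\ell)!$ divided by the product of the hook lengths that change when $h$ is removed. I would bound this using the geometry of $\lambda$. When $\lambda_1\le rn$ and $k\le rn$, every rim hook of length $\ell$ spans many rows and columns, so the hook lengths along it are comparable to $\ell$-fold products of order $n/r$. For transpositions ($\ell=2$) there is a cleaner sanity check via the content formula, $\chi_\lambda(\tau)/d_\lambda=\sum_{(i,j)}(j-i)/\binom n2$, which is at most about $\max(\lambda_1,k)/n$. The constant $q$ absorbs the regime where $\lambda$ is roughly balanced, where only a uniform exponential decay in $\ell$ is needed.

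\emph{Main obstacle.} The naive triangle-inequality bound cannot work in general. For $\ell=1$ the branching rule gives $\sum_c d_{\lambda\setminus c}=d_\lambda$, so the ratios sum to exactly $1$. For long cycles and near-hook shapes, too, many rim hooks exist and the decay must come from sign cancellation in $(-1)^{\mathrm{ht}(h)}$. A second difficulty is that removing hooks can increase the ratio: $(\lambda\setminus h)_1/(n-\ell)$ may exceed $\lambda_1/n$, so the induction hypothesis must be phrased so that $r$ does not degrade over the steps.

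To handle both issues I would first try the following. Remove cycles in decreasing order of length, and remove hooks from the larger of the first row or first column, so that $\lambda_1/n$ and $k/n$ stay monotone. Treat the exceptional near-trivial shapes, where $\lambda_1$ or $k$ is close to $n$, separately using the explicit formula for characters of near-trivial representations (e.g.\ $\chi_{(n-1,1)}(\sigma)=\fix(\sigma)-1$). Allow $\varphi$ to be a small absolute constant, so that the losses from cancellation-free estimates in the generic regime are absorbed.
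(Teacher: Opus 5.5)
The paper does not prove this statement; it quotes it from Roichman \cite{r96} and uses it as a black box (only for derangements, in \Cref{lem:largek}). Your sketch is not a proof either, because the one-step estimate you reduce everything to is false.

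Take $\ell=2$ and the square $\lambda=(m^m)$, $n=m^2$, with $m\ge 1/q$, so that $r=\max\{1/m,1/m,q\}=q<1$. The only rim hooks of length $2$ are the horizontal and the vertical domino ending at the corner $(m,m)$. In every standard Young tableau of square shape, $n$ sits at $(m,m)$ and $n-1$ sits at one of its two neighbours. Since a domino has a unique standard filling, $\sum_h d_{\lambda\setminus h}=d_\lambda$, so your sum equals exactly $1$, not something at most $r^{2c}$.

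Your induction therefore cannot bound $|\chi_\lambda(\tau)|/d_\lambda$ below $1$ even for a single transposition $\tau$, whereas the theorem demands $q^{2\varphi}<1$. The true value is $0$. The two domino terms carry opposite signs and leave conjugate shapes of equal dimension; equivalently, the contents of a square sum to zero. Your content-formula sanity check bounds the signed character, which is exactly the information the triangle inequality throws away.

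None of your proposed remedies touches this.
\begin{itemize}
\item The Murnaghan--Nakayama sum runs over all rim hooks of length $\ell$, so you cannot choose to remove hooks only from the first row or column. The only freedom the rule offers is the order in which the cycles of $\sigma$ are processed.
\item Your order (nontrivial cycles first) gives the bound $1$ above. The reverse order (transposition last) also gives unsigned count $d_\lambda$, for every $\lambda$, since $1$ and $2$ always form a domino in a standard tableau.
\item Shrinking $\varphi$ cannot turn a bound of $1$ into one below $1$.
\item The square is as far from the near-trivial shapes as possible, so treating near-trivial shapes separately does not help.
\item The claim that in the balanced regime every rim hook of length $\ell$ spans many rows and columns is false: a domino spans at most two.
\end{itemize}
What is missing is a mechanism that controls the signed sum, together with control of how $\max\{\lambda_1/n,k/n\}$ evolves under hook removal. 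That is the real content of Roichman's theorem, so in this paper the appropriate ``proof'' is the citation itself.
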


\Cref{thm:roichman} allows us to prove the following.

\begin{lemma}
\label{lem:largek}
There exist absolute constants $\varphi > 0$ and $0 < q < 1$ such that for any $n\geq 4$, any partition  $\lambda = (\lambda_1, \lambda_2, \ldots ,\lambda_k) \vdash n$, and any derangement $\sigma \in \mathbb{S}_n$, 
\[
\left| \frac{\chi_\lambda(\sigma)}{d_\lambda} \right| \le e^{-\varphi\cdot \alpha},
\]
where $\alpha = \min\{n - \lambda_1, n - k, (1-q)n\}$. 
\end{lemma}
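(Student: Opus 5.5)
Since $\sigma$ is a derangement, $\supp(\sigma)=n$, so \Cref{thm:roichman} (with the same constants $\varphi$ and $q$) gives
\[
\left|\frac{\chi_\lambda(\sigma)}{d_\lambda}\right| \le M^{\varphi n}, \qquad M:=\max\left\{\frac{\lambda_1}{n},\frac{k}{n},q\right\}\in(0,1].
\]
The plan is to convert this into an exponential bound using the elementary inequality $x\le e^{-(1-x)}$, valid for all real $x$ (it is $1+y\le e^y$ with $y=x-1$). Since $M\ge 0$, raising to the nonnegative power $\varphi n$ preserves the inequality, so
\[
M^{\varphi n}\le e^{-\varphi n(1-M)}=e^{-\varphi\, n(1-M)}.
\]

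It then remains to identify $n(1-M)$ with $\alpha$. Because $t\mapsto n(1-t)$ is decreasing, $n(1-\max\{a,b,c\})=\min\{n(1-a),n(1-b),n(1-c)\}$. Applied to $a=\lambda_1/n$, $b=k/n$, $c=q$, this gives exactly $\min\{n-\lambda_1,\,n-k,\,(1-q)n\}=\alpha$, and hence $|\chi_\lambda(\sigma)/d_\lambda|\le e^{-\varphi\alpha}$.

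The only points needing care are minor. One is the edge case $M=1$ (the trivial or sign representation), where $\alpha=0$ and the bound reads $\le 1$, which holds trivially. The other is checking that $\varphi,q$ are the absolute constants of \Cref{thm:roichman}, which applies because $n\ge4$. There is no real obstacle here: the argument is a direct specialization of Roichman's bound together with the inequality $x\le e^{x-1}$.
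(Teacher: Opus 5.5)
Your proposal is correct and matches the paper's proof: both apply \Cref{thm:roichman} with $\supp(\sigma)=n$, observe that $\max\{\lambda_1/n, k/n, q\} = 1-\alpha/n$, and finish with $(1-\alpha/n)^{\varphi n}\le e^{-\varphi\alpha}$. The edge case $M=1$ you flag needs no separate treatment, since the general argument already covers it.
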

\begin{proof}
Since $\sigma$ is a derangement, $\text{fix}(\sigma) = 0$; therefore
$\supp(\sigma) = n - \fix(\sigma) = n$.  
Furthermore, from the definition of $\alpha$, we have: 
\[
\max \left\{\frac{\lambda_1}{n}, \frac{k}{n}, q \right\}
= 1 - \frac{\alpha}{n}.
\]
Substituting these into the bound
from Theorem~\ref{thm:roichman} yields:
\[
\left| \frac{\chi_\lambda(\sigma)}{d_\lambda} \right| \le \left( 1 - \frac{\alpha}{n} \right)^{\varphi n} \le e^{-\varphi\cdot \alpha}.
\qedhere
\]
\end{proof}

\begin{lemma}[Theorem 1.2 (i) from \cite{ls08}, paraphrased]\label{lem:chi-sigma-root-dimension}
    For any derangement $\sigma \in \mathbb{S}_n$ and any partition $\lambda \vdash n$:
    \[
        \left|{\chi_\lambda(\sigma) \over d_\lambda}\right| \le {1\over d_\lambda^{1/2 - o(1)}}.
    \]
\end{lemma}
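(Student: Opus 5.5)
The plan is to read this off directly from Larsen and Shalev's character bound for the symmetric group. I will not reprove any character theory. The work is in matching their normalization to ours.

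First, recall the form of \citet[Theorem 1.2 (i)]{ls08}. For every $n$, every fixed-point-free $\sigma\in\mathbb{S}_n$ and every irreducible character $\chi_\lambda$, they prove
\[
    |\chi_\lambda(\sigma)| \le \chi_\lambda(\id)^{1/2+o(1)},
\]
where the $o(1)$ term tends to $0$ as $n\to\infty$ uniformly in $\lambda$ and in the derangement $\sigma$. Here a derangement is exactly a fixed-point-free permutation, so the hypotheses match.

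This sits inside their more general estimate $|\chi(\sigma)|\le \chi(1)^{E(\sigma)+o(1)}$. The exponent $E(\sigma)$ is built from how the cycle counts $\fix(\sigma^{i})$ grow. In that framework, fixed points are exactly the cycles that could force $E(\sigma)$ above $1/2$; every cycle of length at least $2$ contributes at most $1/2$ per point. So when $\fix(\sigma)=0$ one gets $E(\sigma)\le 1/2$, and this is the mechanism I would point to when quoting the result. That exponent analysis is the genuinely hard step, and it is carried out in \cite{ls08} via Murnaghan--Nakayama-type recursions and the hook length formula. I will cite it rather than reproduce it.

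Second, use $d_\lambda=\chi_\lambda(\id)$ and divide both sides by $d_\lambda$:
\[
    \left|\frac{\chi_\lambda(\sigma)}{d_\lambda}\right|\le d_\lambda^{1/2+o(1)-1}=\frac{1}{d_\lambda^{1/2-o(1)}}.
\]
This is the claimed inequality, with the $o(1)$ understood as $n\to\infty$ uniformly over $\lambda\vdash n$ and over derangements $\sigma$. The bound is informative only when $d_\lambda$ grows with $n$. For $\lambda=(n),(1^n)$ we have $d_\lambda=1$, and the inequality just says $|\chi_\lambda(\sigma)|\le 1$, which holds trivially. This is consistent with \eqref{eq:char} separating those two terms out.

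The only real obstacle is bookkeeping: making sure our paraphrase keeps the uniformity of the $o(1)$ that the later lower-bound argument will need. I would state explicitly that there is a function $\epsilon(n)\to 0$, independent of $\lambda$ and $\sigma$, with $|\chi_\lambda(\sigma)|/d_\lambda\le d_\lambda^{-1/2+\epsilon(n)}$.
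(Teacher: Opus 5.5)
Your proposal is correct and matches the paper, which also gives no proof of its own: it cites Larsen--Shalev's bound $|\chi_\lambda(\sigma)|\le \chi_\lambda(\id)^{1/2+o(1)}$ for fixed-point-free $\sigma$ and rewrites it using $d_\lambda=\chi_\lambda(\id)$. You also state explicitly that the $o(1)$ is uniform in $\lambda$ and $\sigma$, which is exactly what Case 3 of the derangement lemma later uses.
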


\begin{lemma}\label{lem:dimension-hook-length}
    For any partition $\lambda=(\lambda_1, \dots, \lambda_k)\vdash n$ with $\lambda_1 \geq \frac{11}{12}\cdot n$, it holds that: $d_\lambda \geq \left(\frac{n}{2(n-\lambda_1)}\right)^{n - \lambda_1}$ 
\end{lemma}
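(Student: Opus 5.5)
We will prove the bound with the hook length formula $d_\lambda = n!/\prod_{c\in\lambda} h(c)$. Set $m := n-\lambda_1 \le n/12$. The plan is to split the cells of the Young diagram into the first row and the remaining $m$ cells, and bound the two hook products separately.

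\emph{Cells outside the first row.} These cells form the diagram of the partition $\bar\lambda = (\lambda_2,\dots,\lambda_k)\vdash m$. The hook of such a cell in $\lambda$ equals its hook in $\bar\lambda$, since the first row lies above it and contributes nothing to arm or leg. So their hook product is $\prod_{c\in\bar\lambda}h(c) = m!/d_{\bar\lambda} \le m!$.

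\emph{Cells in the first row.} The cell $(1,j)$ has hook $h(1,j) = \lambda_1 - j + \lambda'_j$, where $\lambda'_j-1$ counts the cells below it. For $j>\lambda_2$ there is nothing below, so these hooks are exactly $1,2,\dots,\lambda_1-\lambda_2$. For $j\le\lambda_2$ we have $h(1,j) \le \lambda_1 - j + 1 + m$, so these $\lambda_2$ hooks are at most the $\lambda_2$ consecutive integers ending at $\lambda_1+m=n$. Multiplying the two ranges,
\[
\prod_{j} h(1,j) \le (\lambda_1-\lambda_2)!\cdot \frac{n!}{(n-\lambda_2)!}.
\]

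\emph{Combining.} This gives
\[
d_\lambda \ge \frac{(n-\lambda_2)!}{(\lambda_1-\lambda_2)!\, m!}.
\]
Since $n-\lambda_2 = (\lambda_1-\lambda_2)+m$, the ratio $(n-\lambda_2)!/(\lambda_1-\lambda_2)!$ is a product of $m$ factors, each at least $\lambda_1-\lambda_2+1 \ge \lambda_1 - m + 1$ (using $\lambda_2\le m$). Together with $m!\le m^m$,
\[
d_\lambda \ge \left(\frac{\lambda_1-m}{m}\right)^{m} = \left(\frac{n-2m}{m}\right)^{m}.
\]
The hypothesis $\lambda_1\ge \tfrac{11}{12}n$ means $m\le n/12$, so $n-2m \ge \tfrac{5}{6}n \ge n/2$, and therefore $d_\lambda \ge \left(\frac{n}{2m}\right)^m$, which is the claim. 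The case $m=0$ is trivial, with both sides equal to $1$ under the convention $0^0=1$.

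The main obstacle is the first-row hook bound for $j\le\lambda_2$. It needs to be stated so that the hooks are dominated termwise by distinct consecutive integers. This holds because $\lambda'_j\le k\le m+1$ and the bound $\lambda_1-j+1+m$ is strictly decreasing in $j$, so after sorting the hooks we can compare them termwise.
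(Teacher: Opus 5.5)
Your proof is correct and follows essentially the same route as the paper. Both apply the hook length formula and upper-bound the hook product by treating the first row (its last $\lambda_1-\lambda_2$ cells contribute exactly $(\lambda_1-\lambda_2)!$, its first $\lambda_2$ cells have hooks at most $n$) separately from the $m=n-\lambda_1$ cells below it. Your intermediate bound $d_\lambda \ge \binom{n-\lambda_2}{m}$ is a bit sharper and more carefully justified than the paper's $\prod h(i,j) \le (n-2m)!\, n^m m^m$, which the paper asserts without derivation. As a result, your final step only needs $m \le n/4$ rather than $m \le n/12$.
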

\begin{proof}
    Let $\gamma=n-\lambda_1$. By the hook length formula (see \Cref{app:background-representation-theory}), we have:
    \begin{align*}
        d_\lambda = {n! \over \prod h(i,j)} &\geq \frac{n!}{(n-2\gamma)! \cdot n^\gamma \cdot \gamma^\gamma} \\
        & = \frac{n \cdots (n-2\gamma+1)}{n^\gamma \gamma^\gamma}\\
        & = \frac{n}{n} \cdots \frac{n-\gamma+1}{n} \cdot \frac{(n-\gamma) \cdots (n-2\gamma+1)}{\gamma^\gamma}\\
        &\geq \left(\frac{n-\gamma+1}{n}\right)^\gamma \cdot \left(\frac{n-2\gamma+1}{\gamma}\right)^\gamma\\
        &\geq \left(\frac{n^2-6\cdot n\cdot \gamma}{\gamma \cdot n}\right)^\gamma \\
        &=\left(\frac{n}{\gamma} - 6\right)^\gamma\\
        &=\left(\frac{n}{\gamma}\left(1-\frac{6\gamma}{n}\right)\right)^\gamma\\
        &\geq \left(\frac{n}{2\gamma}\right)^\gamma,
    \end{align*} 
    where the last inequality used the fact that $\gamma \le {n \over 12}$.
\end{proof}

\begin{lemma} \label{lem:derang}
Consider any sufficiently large $n$, any derangement $\sigma \in \mathbb{S}_n$, and let $P$ be a bi-invariant PSD kernel. Let:
\[
    P(\tau,\pi) = \sum_{\lambda\vdash n} w_\lambda \frac{\chi_\lambda(\tau\pi^{-1})}{d_\lambda},
\]
be the decomposition guaranteed by \Cref{lem:char}, then:
\[
    \left|\sum_{\substack{\lambda \vdash n\\ \lambda \neq (n), (1^n)}} w_\lambda \frac{\chi_\lambda(\sigma)}{d_\lambda}\right| \le \frac{C}{n}, 
\]
for some constant $C$. 
\end{lemma}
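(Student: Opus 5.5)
Since $w_\lambda\ge 0$ and $\sum_\lambda w_\lambda = 1$ (by \Cref{lem:char}), the triangle inequality gives
\[
\left|\sum_{\lambda \neq (n),(1^n)} w_\lambda \frac{\chi_\lambda(\sigma)}{d_\lambda}\right| \le \max_{\lambda\neq(n),(1^n)} \left|\frac{\chi_\lambda(\sigma)}{d_\lambda}\right|.
\]
So it suffices to show that every nontrivial, non-sign character ratio at a derangement is $O(1/n)$, uniformly in $\lambda$. We may also use $\chi_{\lambda'}(\sigma)=\sgn(\sigma)\chi_\lambda(\sigma)$ and $d_{\lambda'}=d_\lambda$. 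Hence we can assume without loss of generality that $\lambda_1 \ge k$, i.e.\ $\lambda$ is at least as ``long'' as it is ``tall''; otherwise we replace $\lambda$ by its transpose. We then split on $\gamma := n-\lambda_1 \ge 1$, which is nonzero since $\lambda\neq(n)$, and $\lambda' \neq (n)$ since $\lambda \neq (1^n)$.

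\textbf{Case 1: $\gamma$ large (say $\gamma \ge c\log n$ for a suitable constant $c$).} We use \Cref{lem:largek}. Since $k\le\lambda_1$, we have $n-k\ge n-\lambda_1=\gamma$, so $\alpha=\min\{\gamma, n-k, (1-q)n\} \ge \min\{\gamma,(1-q)n\}$. This gives $|\chi_\lambda(\sigma)/d_\lambda| \le e^{-\varphi\min\{\gamma,(1-q)n\}}$. Choosing $c = 1/\varphi$, both $e^{-\varphi\gamma}\le 1/n$ and $e^{-\varphi(1-q)n}\le 1/n$ hold for large $n$.

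\textbf{Case 2: $1\le\gamma < c\log n$.} Here $\lambda_1 \ge \frac{11}{12}n$ for large $n$, so \Cref{lem:dimension-hook-length} gives $d_\lambda \ge (n/(2\gamma))^\gamma \ge n/2$. By \Cref{lem:chi-sigma-root-dimension}, $|\chi_\lambda(\sigma)/d_\lambda| \le d_\lambda^{-1/2+o(1)}$. This yields roughly $(n/(2\gamma))^{-\gamma(1/2-o(1))}$. For $\gamma\ge 3$ this is at most about $n^{-3/2+o(1)}\cdot\mathrm{polylog} \le 1/n$.

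\textbf{Main obstacle: $\gamma\in\{1,2\}$.} For $\gamma=1$ the square-root bound only gives $n^{-1/2+o(1)}$, and for $\gamma=2$ it gives $n^{-1+o(1)}$; neither is $O(1/n)$. For these finitely many shapes, namely $(n-1,1)$, $(n-2,2)$, $(n-2,1,1)$, I would compute the characters explicitly instead.
\begin{itemize}
\item For $(n-1,1)$: $\chi(\sigma)=\fix(\sigma)-1=-1$, with $d=n-1$. The ratio is $-1/(n-1)$.
\item For $(n-2,2)$: $\chi(\sigma)=\binom{\fix}{2}+\twocycle(\sigma)-\fix(\sigma)$, which equals $\twocycle(\sigma)\le n/2$ for a derangement, with $d=n(n-3)/2$. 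The ratio is $O(1/n)$.
\item For $(n-2,1,1)$: $\chi=\binom{\fix-1}{2}-\twocycle$, which equals $1-\twocycle$ for a derangement, with $d=\binom{n-1}{2}$. The ratio is $O(1/n)$.
\end{itemize}
These formulas come from the standard decomposition of the permutation characters on pairs, and I expect verifying them to be the fiddliest part. Alternatively, one could strengthen Case 2 by handling small $\gamma$ via these explicit formulas up to a larger constant threshold. Combining the cases gives the constant $C$.
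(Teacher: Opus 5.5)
Your proposal is correct and follows essentially the same route as the paper. You use the triangle inequality to reduce to the worst character ratio and transposition symmetry to assume the first row dominates. From there you apply Roichman's bound via \Cref{lem:largek} when $n-\lambda_1 \gtrsim \log n$, Larsen--Shalev together with the hook-length lower bound (\Cref{lem:dimension-hook-length}) when $3 \le n-\lambda_1 = O(\log n)$, and the explicit characters of $(n-1,1)$, $(n-2,2)$, $(n-2,1,1)$ at a derangement for the remaining cases. The only difference is cosmetic: you apply the transpose reduction before splitting into cases, which merges the paper's first two cases ($\alpha=(1-q)n$ and $\alpha \ge \frac{1}{\varphi}\ln n$) into one.
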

\begin{proof}
    We have:
    \[
        \left|\sum_{\substack{\lambda \vdash n\\ \lambda \neq (n), (1^n)}} w_\lambda \frac{\chi_\lambda(\sigma)}{d_\lambda}\right| \le  \max_{\substack{\lambda \vdash n\\\lambda \neq (n),(1^n)}}\left|\frac{\chi_\lambda(\sigma)}{d_\lambda}\right| \sum_{\substack{\lambda \vdash n\\ \lambda \neq (n), (1^n)}} w_\lambda \le \max_{\substack{\lambda \vdash n\\ \lambda \neq (n),(1^n)}}\left|\frac{\chi_\lambda(\sigma)}{d_\lambda}\right|, 
    \]
    where we first apply H\"older's inequality on $\ell_\infty$ and $\ell_1$ and then we use the fact that $\sum_{\lambda \vdash n } w_\lambda = 1$, as well as the non-negativity of the $w_\lambda$s. We now give an upper bound for the right-hand side. Let $\varphi$ and $q$ be the constants given by \Cref{lem:largek}. Let $\lambda =(\lambda_1, \ldots , \lambda_k) \vdash n$, and let $\alpha = \min \{n- \lambda_1 , n-k, (1-q)n\}$. Since $\lambda\neq(n),(1^n)$, we have $\min\{n-\lambda_1, n-k\}\geq 1$. We obtain an upper bound by considering five possible cases for the value of $\alpha$.
    \begin{description}
        \item[Case 1:] $\alpha=(1-q)n$. Using \Cref{lem:largek}, we have:
        \[
            \left|\frac{\chi_\lambda(\sigma)}{d_\lambda}\right|  \le e^{-\varphi \cdot \alpha} = e^{-\Theta(n)} = o\left({1\over n}\right).
        \]
        For the rest of the proof we will assume that $\alpha = n-\lambda_1 \geq 1$. This is without loss of generality, since, for any partition that does not satisfy this condition (i.e., for which $\alpha=n-k$), its transpose partition does satisfy it, and the values of $\left|\frac{\chi_\lambda(\sigma)}{d_\lambda}\right|$ are the same for the two partitions.%
        
        \item[Case 2:] $\alpha \ge {1\over \varphi} \ln n$. Using \Cref{lem:largek} we have:
        \[
            \left|\frac{\chi_\lambda(\sigma)}{d_\lambda}\right|  \le e^{-\varphi \cdot \alpha} \le {1\over n}.
        \]
        
        \item[Case 3:]  $3 \le \alpha < {1\over \varphi}\ln n$. Using \Cref{lem:chi-sigma-root-dimension} and \Cref{lem:dimension-hook-length} and recalling that $\alpha=n-\lambda_1$, and that therefore $\lambda_1 = n-\alpha > n - \frac{1}{\varphi}\ln n \geq \frac{11}{12}n$ for sufficiently large $n$, we obtain: 
        \[
            \left|\frac{\chi_\lambda(\sigma)}{d_\lambda}\right| \leq \left(\frac{1}{d_\lambda}\right)^{1/2-o(1)} \leq \left(\frac{2\alpha}{n}\right)^{\alpha\left(\frac{1}{2} - o(1)\right)} \leq \left(\frac{(2/\varphi )\cdot \ln(n)}{n}\right)^{3/2 -o(1)} = O\left({1\over n}\right),
        \]
        where we use that $\alpha \geq 3$ and $\alpha < O(\log n)$.
        \item[Case 4:] $\alpha=1$. In this case, the only partition is $\lambda = (n-1,1)$, corresponding to the standard representation, which has dimension $n-1$. Its character is known to satisfy $\chi_\lambda (\pi) = \fix(\pi)-1$ (See \cite[Example 2.3.8]{s01}). Since $\sigma$ is a derangement, $\fix(\sigma)=0$, and the character $\chi_\lambda(\sigma) = -1$. We then have:
        \[
             \left|\frac{\chi_\lambda(\sigma)}{d_\lambda}\right| =  \left|\frac{-1}{n-1}\right| = O\left({1\over n}\right).
        \]
        \item[Case 5:] $\alpha=2$. In this case, the only two partitions are $\lambda=(n-2, 2)$ and $\mu =(n-2, 1, 1)$.
        
        By \eqref{eq:character-decompostion-of-n-2-2} in \Cref{example:decomposition-of-reps} in \Cref{app:background-representation-theory}, we can write the character of the permutation module $M^{(n-2,2)}$ as the sum of irreducible characters as follows: %
        \[
            \chi_{M^{(n-2,2)}} = \chi_{(n-2,2)} + \chi_{(n-1,1)} + \chi_{(n)}.
        \]
        By \eqref{eq:character-of-m-n-2-2}, this equation becomes:
        \[
            \twocycle(\pi) + \binom{\fix(\pi)}{2} = \chi_{(n-2,2)}(\pi) + \fix(\pi)-1 + 1
        \]
        for all $\pi \in \mathbb{S}_n$, where $\twocycle(\pi)$ is the number of length-two cycles in the cycle decomposition of $\pi$. And hence:
        \[
            \chi_{(n-2,2)}(\pi)  =  \twocycle(\pi) + \binom{\fix(\pi)}{2} - \fix(\pi)
        \]
        for all $\pi \in \mathbb{S}_n$. 
        
        Similarly, by \eqref{eq:character-decompostion-n-2-1-1} in \Cref{example:decomposition-of-reps} in \Cref{app:background-representation-theory} we can write the character of the permutation module $M^{(n-2,1,1)}$ as:
        \[
            \chi_{M^{(n-2,1,1)}} = \chi_{(n-2,1,1)} + \chi_{(n-2,2)} + 2\chi_{(n-1,1)} + \chi_{(n)}.
        \]
        By \eqref{eq:character-of-n-2-1-1}, this equation becomes:
        \[
            \fix(\pi)(\fix(\pi) -1) = \chi_{(n-2,1,1)} + \chi_{(n-2,2)} + 2(\fix(\pi)-1) + 1
        \]
        for all $\pi\in \mathbb{S}_n$, which gives:
        \[
            \chi_{(n-2,1,1)} = 1 - \twocycle(\pi) + {\fix(\pi)(\fix(\pi)-3)\over 2}
        \]
        for all $\pi\in\mathbb{S}_n$. Since $\sigma$ is a derangement and $\fix(\sigma)=0$, we have:
        \[
            \chi_{(n-2,2)}(\sigma)  =  \twocycle(\sigma),
        \]
        and:
        \[
            \chi_{(n-2,1,1)}(\sigma) = 1 - \twocycle(\sigma).
        \]
        By \Cref{example:hook-length} in \Cref{app:background-representation-theory}: $d_\lambda = n(n-3)/2$ and $d_\mu = (n-1)(n-2)/2$. 
        Putting things together, we have that:
        \begin{align*}
            \left|\frac{\chi_\lambda(\sigma)}{d_\lambda}\right| &=\frac{2\cdot \twocycle(\sigma)}{n(n-3)} \leq \frac{n}{n(n-3)} = \frac{1}{n-3} = O\left(\frac{1}{n}\right), 
        \end{align*}
        and:
        \[
            \left|\frac{\chi_\mu(\sigma)}{d_\mu}\right| = {2\cdot (\twocycle(\sigma)-1) \over (n-1)(n-2)} \le {n \over (n-1)(n-2)} = O\left({1\over n}\right).
        \]
    \end{description}
    This completes the proof of \Cref{lem:derang}.
\end{proof}

We can now complete the proof of \Cref{thm:cayley-lower-bound}.
\begin{proof}[Proof of \Cref{thm:cayley-lower-bound}]
    Let $n\geq 6$ be a large enough integer so that \Cref{lem:derang} holds. Let $\cH$ be a distribution that $\Delta$-approximates $(\mathbb{S}_n, \cayley)$. Without loss of generality, by \Cref{lem:biinv}, we can assume that $P_\cH$ is a bi-invariant PSD kernel. 

    We let $\sigma^*$ be a derangement with $\Theta(n)$ cycles and $\pi^*$ be a derangement with $O(1)$ cycles and having the same sign as $\sigma^*$. We explicitly define these permutations expressing them in their cycle decomposition. If $n$ is even, we let:
    \[
        \sigma^*:= \prod_{i=1}^{n/2} (2i-1,2i),
    \]
    and observe that $c(\sigma^*)=n/2$. If instead $n$ is odd, we define:
    \[
        \sigma^*:= (1, 2, 3)\prod_{i=2}^{\frac{n-1}{2}} (2i,2i+1),
    \]
    and observe that $c(\sigma^*)=(n-1)/2$ and $\sigma^*$ can be decomposed in $(n+1)/2$ transpositions. Now, if $n$ is even and $\sgn(\sigma^*)=1$ or if $n$ is odd and $\sgn(\sigma^*)=-1$, we define:
    \[
        \pi^*=(1,2,\dots, n-3)(n-2, n-1, n),
    \]
    and note that $\sgn(\pi^*)=\sgn(\sigma^*)$ since it can be decomposed into $n-2$ transpositions. If instead $n$ is even and $\sgn(\sigma^*)=-1$ or if $n$ is odd and $\sgn(\sigma^*)=1$, we define:
    \[
        \pi^* = (1, 2, \dots, n),
    \]
    and note that $\sgn(\pi^*)=\sgn(\sigma^*)$ since it can be decomposed into $n-1$ transpositions. 
    
    Observe that, since $n\geq 6$:
    \[
    \cayley(\id, \sigma^*) \geq \frac{(n-1)/2}{n} \geq \frac{1}{3}\hspace{1cm} \text{and} \hspace{1cm}\cayley(\id, \pi^*) \leq {2\over n}.
    \]
    Since $\cH$ $\Delta$-approximates $(\mathbb{S}_n, \cayley)$, we have:
    \[
        P_{\cH}(\id, \pi^*) \le \cayley(\id, \pi^*) \leq {2\over n}.
    \]
    Plugging this into \eqref{eq:char}, and letting $C$ be the constant in the statement of \Cref{lem:derang}:
    \begin{align*}
        w_0 + w_{\sgn} - {C\over n} &\overset{\Cref{lem:derang}}{\le} w_0 + w_{\sgn} \cdot \sgn(\pi^*) + \sum_{\substack{\lambda \vdash n \\\lambda\ne(n), (1^n)}}w_{\lambda}\frac{\chi_{\lambda}(\pi^*)}{d_{\lambda}}
        \overset{\eqref{eq:char}}{=} P_{\cH}(\id, \pi^*)
        \le {2\over n},
    \end{align*}
    giving:
    \begin{equation}\label{eq:ub-some-of-weights}
        w_0 + w_{\sgn} \le {2+C\over n}.
    \end{equation}
    Again applying \Cref{eq:char} and \Cref{lem:derang}, we have:
    \[
        P_{\cH}(\id, \sigma^*) \overset{\eqref{eq:char}, \Cref{lem:derang}}{\le} w_0 + w_{\sgn} +{C \over n} \overset{\eqref{eq:ub-some-of-weights}}{\le} {2 + 2C \over n}.
    \]
    On the other hand: 
    \[
        P_{\cH}(\id, \sigma^*) \ge {1\over \Delta} \cayley(\id, \sigma^*) = {1\over 3\Delta}.
    \]
    Combining the last two equations yields:
    \[
        \Delta \ge {n\over 6(1+C)},
    \]
    completing the proof.
\end{proof}

\section{Conclusions and Open Questions}

Our work shows that the Ulam similarity enjoys sublinear
LSH distortion whereas the Cayley similarity does not.  An
obvious open problem is to close the glaring yet tantalizing 
gap between the
upper bound ($n/\sqrt{\log n}$) and the lower bound ($\Omega(n^{0.12})$) for the Ulam similarity; our
current belief is that the correct LSH distortion bound
is $\tilde{\Theta}(\sqrt{n})$.  Improving the base case
(\Cref{prop:ulam-small-instance}) is a natural direction,
though there are computational limitations to finding
a larger and better base instance.  While it seems possible to improve our upper bound to $\tilde{O}(n/\log n)$, going beyond this to $\tilde{O}(n^{1 - \Omega(1)})$ might  
require an entirely new approach.
Extending our results to other permutation similarities
such as reversal is another interesting direction; our
current lower bound techniques seem inadequate for the block reversal similarity, for which
we conjecture an $\Omega(n)$ lower bound.

\section*{Acknowledgments}
The authors wish to thank Alessandro Panconesi for providing useful ideas for the project.

The Ulam upper bound was obtained after weeks of
interaction with and help from an LLM; the LLM-generated
proof was hand-verified and somewhat simplified.  
The solver for the base case of the Ulam
lower bound was written by an LLM; the instance itself
was hand-verified.  The high-level approach of using representation theory, in particular,
Roichman's Theorem and derangements, to obtain the Cayley lower bound was suggested by an LLM (after the initial attempts to extend Ulam lower bound proof approach to Cayley failed). All the final proofs were written and verified by the authors.
\bibliographystyle{plainnat}
\bibliography{bib}

\newpage
\appendix
\crefalias{section}{appendix}

\section{Background: The Symmetric Group and its Representations}\label{app:background-representation-theory}

In this section, we review some basic results in the representation theory of symmetric groups. This is not intended to be a comprehensive introduction to the subject, and we refer the reader to some references (e.g., \cite{s01,k08}) for a more detailed treatment. In the discussion that follows every group is assumed to be finite.

\paragraph{Group Theory of $\mathbb{S}_n$.} Two elements $x,y$ of a group $G$ are conjugate to each other if there exists some element $z\in G$ such that $zxz^{-1}=y$. This defines an equivalence relation and the corresponding classes are called \emph{conjugacy classes} of $G$. A \emph{class function} is a function that is constant on each equivalence class. $\mathbb{S}_n$ is the group of permutations of $[n]$ under composition.
Any permutation in $\mathbb{S}_n$ can be written as the composition of disjoint cycles in a way that is unique up to changing their order. This representation is known as the \emph{cycle decomposition} of the permutation. The action of an element $\sigma$ by conjugation on an element $\pi$ simply re-labels the elements of $[n]$ according to $\sigma$. A key implication of this fact, is that the conjugacy classes of $\mathbb{S}_n$ are labeled by the distinct shapes of cycle decompositions (the \emph{cycle types}), and are in one-to-one correspondence with the partitions of $n$. It is also known that any permutation can be written as the product of transpositions: permutations that simply swap two elements $i$ and $j$. While this representation is not unique, the number of transpositions in each of these factorizations always has the same parity. This allows us to define the sign $\sgn (\pi)$ of a permutation $\pi$, as $1$ if the permutation can be written as the product of an even number of transpositions, and $-1$ otherwise. Given a permutation $\pi \in\mathbb{S}_n$, we denote by $\fix(\pi)$ the number of fixed points of $\pi$, and by $\twocycle(\pi)$ the number of cycles of length $2$ in the cycle decomposition of $\pi$.

\paragraph{Representations of Finite Groups.}A matrix representation of a finite group $G$ is a homomorphism $\rho :G \to \mathrm{GL}_d$ where $\mathrm{GL}_d$ is the group of $d\times d$ invertible matrices in $\mathbb{C}^{d\times d}$. Given a complex vector space $V$ of dimension $d$, we let $GL(V)$ be the set of invertible linear transformation on $V$. Clearly, $GL(V)$ and $GL_d$ are isomorphic, and a representation can be then thought of as a homomorphism of $G$ into $GL(V)$. Under this equivalence, a representation is equivalent to a $G$-Module: a vector space $V$ on which $G$ acts linearly, in a way that preserves the group operation.
A subrepresentation is a subspace (/submodule) $W \subseteq V$ with the property that $\rho(G)\cdot W \subseteq W$, i.e., $W$ is closed under the linear action of elements of $G$. A representation $V$ is irreducible if it contains no non-trivial subrepresentation, i.e. no subrepresentation other than $\{0\}$ and $V$ itself.

Two representations $\rho: G \to GL_d$ and $\theta: G \to GL_d$ are equivalent (or isomorphic) if there exists some invertible matrix $T$ with the property that $T\cdot \theta(g) = \rho(g)\cdot T$ for every $g \in G$.
The character $\chi$ of a representation $\rho$ is the operator $\chi:G \to \mathbb{C}$ given by $\chi(g) = \tr (\rho(g))$. A consequence of these definitions, is that for any representation $\rho:G \to \mathrm{GL}_d$, its character $\chi$ satisfies $\chi(e_G) = d$, where $e_G$ is the identity element in $G$. A simple argument shows that characters are class functions, and, in fact, it is known that the characters of the non-equivalent irreducible representations of a finite group $G$ form a basis for the space of class functions. A consequence of this fact is that the number of such characters is the same as the number of conjugacy classes of $G$.

\paragraph{Young Diagrams, Tableaux and Tabloids.} A \emph{Young diagram} is a visual, geometric way to represent an integer partition. Given a partition $\lambda =(\lambda_1, \ldots , \lambda_r)$, a Young diagram of shape $\lambda$ is a shape containing $r$ left-aligned rows of congruent squares, where the $i^{th}$ row contains $\lambda_i$ squares. The reader might find the following example, showing the Young diagram for $\lambda =(5,5,2,2,1)$, more illuminating than the definition:

\begin{figure}[H]
    \centering
    \ydiagram{5,5,2,2,1}
    \caption{The Young diagram for the partition $\lambda =(5,5,2,2,1)$.}
    \label{fig:young-diagram}
\end{figure}

A \emph{Young tableau} for a partition $\lambda\vdash n$, sometimes called a $\lambda$-tableau or a tableau of shape $\lambda$, is a Young diagram for $\lambda$ in which every square has been filled by a distinct number in $[n]$. 

\begin{figure}[H]
    \centering
    \begin{ytableau}
    11 & 8 & 3 & 10 & 14 \\
    13 & 7 & 2 & 9 \\
    6 & 4 \\
    12 & 1 \\
    5
    \end{ytableau}
    \caption{A Young tableau for the partition $\lambda =(5,4,2,2,1)$.}
    \label{fig:tableau}
\end{figure}

If $\lambda \vdash n$, then $\mathbb{S}_n$ has a natural action on the set of $\lambda$-tableaux: given a tableau $T$ the tableau $\pi \cdot T$ is obtained by replacing each number $i$ with $\pi(i)$.

A \emph{Young tabloid} of shape $\lambda$ is an equivalence class of $\lambda$-tableaux under the equivalence relation in which two tableaux are equivalent if you can obtain one from the other by permuting the content of each row. These are typically depicted by removing the vertical lines from one of the tableaux in the class. For example, a tabloid of shape $\lambda = (2,2,1)\vdash n$ is:

\ytableausetup{tabloids}

\[
    \begin{ytableau}
        1 & 3 \\
        5 & 4\\
        2
    \end{ytableau}
    \ytableausetup{notabloids}
     = \left\{\hspace{4mm}
     \begin{ytableau}
        1 & 3 \\
        4 & 5\\
        2
    \end{ytableau}\;,\hspace{4mm}
    \begin{ytableau}
        3 & 1 \\
        4 & 5\\
        2
    \end{ytableau}\;,\hspace{4mm}
    \begin{ytableau}
        1 & 3 \\
        5 & 4\\
        2
    \end{ytableau}\;,\hspace{4mm}
    \begin{ytableau}
        3 & 1 \\
        5 & 4\\
        2
    \end{ytableau}\hspace{4mm}
    \right\}
\]
The group $\mathbb{S}_n$ acts naturally on tabloids too. The action of a permutation $\pi$ on a tabloid is found by choosing an arbitrary tableau $t$ in the tabloid, taking the action of the permutation of that tableau to find a new tableau $t' = \pi \cdot t$, and finally returning the tabloid containing $t'$. The reader may want to verify that this operation is indeed well-defined, and does not depend on the choice of the representative $t$. Given a partition $\lambda \vdash n$, the permutation module $M^{\lambda}$ corresponding to $\lambda$ is the vector space $\mathbb{C}\{\{t_1\}, ... , \{t_r\}\}$ of formal complex linear combinations of the complete set of distinct tabloids $\{t_1\} \ldots \{t_r\}$ of shape $\lambda$. The action of $\mathbb{S}_n$ on tabloids of shape $\lambda \vdash n$ extends linearly to $M^{\lambda}$, giving a representation of $\mathbb{S}_n$.

A \emph{generalized semistandard Young tableau} of shape $\lambda = (\lambda_1, \ldots , \lambda_r )\vdash n$ and content $\mu =(\mu_1, \ldots, \mu_\ell)$ where $\mu$ is a composition of $n$, is a Young diagram of shape $\lambda$ in which each square is filled with a number, satisfying the following properties:
\begin{enumerate}
    \item[(a)] For each $i\in[\ell]$, the number $i$ appears $\mu_i$ times in the diagram,
    \item[(b)] The numbers in each row are weakly increasing,
    \item[(c)] The numbers in each columns are strictly increasing.
\end{enumerate}
The number of generalized semistandard Young tableaux of shape $\lambda$ and content $\mu$ is the so-called \emph{Kostka number} $K_{\lambda\mu}$.

\paragraph{Representations of $\mathbb{S}_n$.} Non-equivalent, irreducible representations of $\mathbb{S}_n$ are in one-to-one correspondence with partitions $\lambda \vdash n$. Each of these $\lambda$ corresponds to the representation $S^\lambda$, sometimes called the $\lambda$-Specht module, and its character is denoted by $\chi_\lambda$. (The construction of these modules is beyond the scope of this paper, and the curious reader is advised to refer to \cite{s01}.) Under this correspondence, $\lambda = (n)$ corresponds to the trivial representation, in which every group element gets mapped to the identity matrix, and $\chi_{(n)}(\pi) =1$ for every $\pi \in \mathbb{S}_n$. On the other hand, the partition $(1^n)$ corresponds to the sign representation, and satisfies $\chi_{(1^n)}(\pi) =\sgn (\pi)$ for every $\pi \in \mathbb{S}_n$. It is known that (see e.g. \cite{r23}), for any permutation $\sigma \in \mathbb{S}_n$ and any partition $\lambda$: $|\chi_\lambda(\sigma)| =|\chi_{\lambda'}(\sigma)|$ (in fact, the values of $\chi_\lambda(\sigma)$ and $\chi_{\lambda'}(\sigma)$ are related by a factor of $\sgn (\sigma)$, but we will not need this stronger fact). Moreover $d_\lambda = d_{\lambda'}$.

Given the Young diagram for a $\lambda =(\lambda_1, ..., \lambda_k)$, the \emph{arm} of a square $(i,j)$ in the diagram, is the collection of squares on the same row that are to the right of $(i,j)$, while the \emph{leg} of $(i,j)$ is the collection of squares on the same column that are below it. The hook length of $(i,j)$ is the value $h_{\lambda}(i,j)$ given by the sum of the cardinalities of the arm and the leg, plus one. Consider the following example for the partition $\lambda = (4,3,1)$:

\begin{center}
\begin{tikzpicture}[scale=0.9]
    \fill[blue!20] (1,0) rectangle (2,-1); %
    \fill[blue!10] (2,0) rectangle (4,-1); %
    \fill[blue!10] (1,-1) rectangle (2,-2); %

    \draw[thick] (0,0) -- (4,0);
    \draw[thick] (0,-1) -- (4,-1);
    \draw[thick] (0,-2) -- (3,-2);
    \draw[thick] (0,-3) -- (1,-3);

    \draw[thick] (0,0) -- (0,-3);
    \draw[thick] (1,0) -- (1,-3);
    \draw[thick] (2,0) -- (2,-2);
    \draw[thick] (3,0) -- (3,-2);
    \draw[thick] (4,0) -- (4,-1);

    \node at (1.5, -0.5) {$x$};
    \node at (2.5, -0.5) {$\bullet$};
    \node at (3.5, -0.5) {$\bullet$};
    \node at (1.5, -1.5) {$\bullet$};

    \draw[<->, thick, red] (2.1, -0.5) -- (3.9, -0.5) node[midway, below] {Arm $= 2$};
    \draw[<->, thick, red] (1.5, -1.1) -- (1.5, -1.9) node[midway, left] {Leg $= 1$};

    \node[anchor=west] at (4.5, -1) {
        \begin{tabular}{l}
        \text{For square $x=(1,2)$:} \\
        $h_\lambda(x) = \text{Arm} + \text{Leg} + 1$ \\
        $\quad\quad \quad= 2 + 1 + 1 = 4$
        \end{tabular}
    };
\end{tikzpicture}
\end{center}

The dimension $d_\lambda$ of the irreducible representation $S^\lambda$ is given by the so-called \emph{hook length} formula:
\[
    d_\lambda := {n!\over \prod_{i,j} h_\lambda(i,j)}.
\]
In the following Young diagram for the partition $\lambda =(4,3,1)\vdash 8$, we have placed the value of $h_\lambda(i,j)$ in each square $(i,j)$. We can then compute the dimension $d_\lambda$ by dividing $8!$ by the product of all these numbers:

\begin{center}
\begin{tikzpicture}[scale=0.9]
    \draw[thick] (0,0) -- (4,0);
    \draw[thick] (0,-1) -- (4,-1);
    \draw[thick] (0,-2) -- (3,-2);
    \draw[thick] (0,-3) -- (1,-3);

    \draw[thick] (0,0) -- (0,-3);
    \draw[thick] (1,0) -- (1,-3);
    \draw[thick] (2,0) -- (2,-2);
    \draw[thick] (3,0) -- (3,-2);
    \draw[thick] (4,0) -- (4,-1);

    \node at (0.5, -0.5) {6};
    \node at (1.5, -0.5) {4};
    \node at (2.5, -0.5) {3};
    \node at (3.5, -0.5) {1};
    \node at (0.5, -1.5) {4};
    \node at (1.5, -1.5) {2};
    \node at (2.5, -1.5) {1};
    \node at (0.5, -2.5) {1};

    \node[anchor=west] at (4.5, -1.5) {
        $ d_\lambda = \dim(S^\lambda) = \frac{8!}{6 \cdot 4 \cdot 3 \cdot 1 \cdot 4 \cdot 2 \cdot 1 \cdot 1} = \frac{40320}{576} = 70 $
    };
\end{tikzpicture}
\end{center}

We apply the same procedure in the following examples, to calculate the dimension of some irreducible representations that are used in the body of the paper.

\begin{example}\label{example:hook-length}
    For any $n \ge 4$ we can compute the dimension of various irreducible representation.

    For $\lambda = (n)$
\begin{center}
\begin{tikzpicture}[scale=0.9]
    \draw[thick] (0,0) -- (3.5,0);
    \draw[thick] (0,-1) -- (3.5,-1);
    \draw[thick] (0,0) -- (0,-1);
    \draw[thick] (1,0) -- (1,-1);
    \draw[thick] (2,0) -- (2,-1);
    \draw[thick] (3,0) -- (3,-1);

    \node at (3.8, -0.5) {$\dots$};

    \draw[thick] (4,0) -- (5.5,0);
    \draw[thick] (4,-1) -- (5.5,-1);
    \draw[thick] (4.5,0) -- (4.5,-1);
    \draw[thick] (5.5,0) -- (5.5,-1);

    \node at (0.5, -0.5) {$n$};
    \node at (1.5, -0.5) {\small $n\!-\!1$}; %
    \node at (2.5, -0.5) {\small $n\!-\!2$};
    \node at (5, -0.5) {$1$};

\end{tikzpicture}
\end{center}

    \begin{equation}
        d_{(n)} = {n! \over n!} = {1}
    \end{equation}

    For $\lambda = (n-1,1)$:
    \begin{center}
    \begin{tikzpicture}[scale=0.9]
    \draw[thick] (0,0) -- (3.5,0);
    \draw[thick] (0,-1) -- (3.5,-1);

    \draw[thick] (0,0) -- (0,-1);
    \draw[thick] (1,0) -- (1,-1);
    \draw[thick] (2,0) -- (2,-1);
    \draw[thick] (3,0) -- (3,-1);

    \draw[thick] (0,-1) -- (0,-2);
    \draw[thick] (1,-1) -- (1,-2);
    \draw[thick] (0,-2) -- (1,-2);

    \node at (3.9, -0.5) {$\dots$};

    \draw[thick] (4.3,0) -- (5.5,0);
    \draw[thick] (4.3,-1) -- (5.5,-1);
    \draw[thick] (4.5,0) -- (4.5,-1);
    \draw[thick] (5.5,0) -- (5.5,-1);

    \node at (0.5, -0.5) {$n$};

    \node at (1.5, -0.5) {\small $n\!-\!2$}; 
    \node at (2.5, -0.5) {\small $n\!-\!3$};
    \node at (5, -0.5) {$1$};

    \node at (0.5, -1.5) {$1$};

\end{tikzpicture}
\end{center}
    
\begin{equation}
    d_{(n-1,1)} = {n! \over  (n-2)! \cdot n} = {n-1}
\end{equation}
For $\lambda =(n-2,1,1)$:

\begin{center}
\begin{tikzpicture}[scale=0.9]
    \draw[thick] (0,0) -- (3.5,0);
    \draw[thick] (0,-1) -- (3.5,-1);

    \draw[thick] (0,0) -- (0,-1);
    \draw[thick] (1,0) -- (1,-1);
    \draw[thick] (2,0) -- (2,-1);
    \draw[thick] (3,0) -- (3,-1);

    \draw[thick] (0,-1) -- (0,-3);
    \draw[thick] (1,-1) -- (1,-3);
    \draw[thick] (0,-2) -- (1,-2);
    \draw[thick] (0,-3) -- (1,-3);

    \node at (3.9, -0.5) {$\dots$};

    \draw[thick] (4.3,0) -- (5.5,0);
    \draw[thick] (4.3,-1) -- (5.5,-1);
    \draw[thick] (4.5,0) -- (4.5,-1);
    \draw[thick] (5.5,0) -- (5.5,-1);

    \node at (0.5, -0.5) {$n$};

    \node at (1.5, -0.5) {\small $n\!-\!3$}; 
    \node at (2.5, -0.5) {\small $n\!-\!4$};
    \node at (5, -0.5) {$1$};

    \node at (0.5, -1.5) {$2$};

    \node at (0.5, -2.5) {$1$};

\end{tikzpicture}
\end{center}

\begin{equation}
    d_{(n-2,1,1)} = {n! \over 2 \cdot (n-3)! \cdot n} ={(n-1)(n-2)\over 2},
\end{equation}
    
For $\lambda =(n-2,2)$:
\begin{center}
\begin{tikzpicture}[scale=0.9]
    \draw[thick] (0,0) -- (3.5,0);
    \draw[thick] (0,-1) -- (3.5,-1);

    \draw[thick] (0,0) -- (0,-1);
    \draw[thick] (1,0) -- (1,-1);
    \draw[thick] (2,0) -- (2,-1);
    \draw[thick] (3,0) -- (3,-1);

    \draw[thick] (0,-2) -- (2,-2);
    \draw[thick] (0,-1) -- (0,-2);
    \draw[thick] (1,-1) -- (1,-2);
    \draw[thick] (2,-1) -- (2,-2);

    \node at (4, -0.5) {$\dots$};

    \draw[thick] (4.3,0) -- (5.5,0);
    \draw[thick] (4.3,-1) -- (5.5,-1);
    \draw[thick] (4.5,0) -- (4.5,-1);
    \draw[thick] (5.5,0) -- (5.5,-1);

    \node at (0.5, -0.5) {\small $n\!-\!1$};

    \node at (1.5, -0.5) {\small $n\!-\!2$}; 

    \node at (2.5, -0.5) {\small $n\!-\!4$};

    \node at (5, -0.5) {$1$};

    \node at (0.5, -1.5) {$2$};

    \node at (1.5, -1.5) {$1$};

\end{tikzpicture}
\end{center}
\begin{equation}
    d_{(n-2,2)} = {n! \over 2 \cdot (n-4)! \cdot (n-2)\cdot(n-1)} = {n(n-3)\over 2},
\end{equation}
\end{example}

For any partition $\lambda =(\lambda_1, \ldots , \lambda_r) \vdash n$, the dimension of the permutation module $M^{\lambda}$ is the number of tabloids of shape $\lambda$, i.e.:
\[
    \operatorname{dim} M^{\lambda} = {n!\over \prod_{i\in [r]} \lambda_i!}.
\]
The group $\mathbb{S}_n$ acts on $M^{\lambda}$ by permuting its basis vectors, and hence, in the standard basis, the actions of the elements of $\mathbb{S}_n$  are all represented by permutation matrices. As a result, for every $\pi \in \mathbb{S}_n$ the character $\chi_{M^{\lambda}}(\pi)$ equals the number of tabloids of shape $\lambda$ fixed by the action of $\pi$. From this, we can infer, for example, that for every $\pi \in \mathbb{S}_n$:
\begin{equation}\label{eq:character-of-m-n-2-2}
    \chi_{M^{(n-2,2)}}(\pi) = \twocycle(\pi) + \binom{ \fix(\pi)}{2},
\end{equation}
\begin{equation}\label{eq:character-of-n-2-1-1}
    \text{and }\quad \chi_{M^{(n-2,1,1)}}(\pi) = \fix(\pi)(\fix(\pi)-1).
\end{equation}

By Maschke's Theorem, each representation $M^{\mu}$, decomposes as the direct sum of irreducible representations. The number of times an irreducible representation $S^{\lambda}$ appears in this sum is given by $K_{\lambda \mu}$. We illustrate the usefulness of this fact in the following example.

\begin{example}\label{example:decomposition-of-reps}
    A simple combinatorial argument shows that:
    \[
        K_{(n)(n-2,2)} = K_{(n-1,1)(n-2,2)} =K_{(n-2,2)(n-2,2)} =1.
    \]
    As a consequence of this fact, we can see that the representations $S^{(n-2,2)}$, $S^{(n-1,1)}$ and $S^{(n)}$ each appear exactly once inside of $M^{(n-2,2)}$. Since:
\[
    \dim M^{(n-2,2)} = \binom{n}{2} = d_{(n-2,2)} + d_{(n-1,1)} + d_{(n)},
\]
we find that:
\[
    M^{(n-2,2)}\cong S^{(n-2,2)} \oplus S^{(n-1,1)} \oplus S^{(n)} 
\]
and hence:
\begin{equation}\label{eq:character-decompostion-of-n-2-2}
    \chi_{M^{(n-2,2)}} = \chi_{(n-2,2)} + \chi_{(n-1,1)} + \chi_{(n)}.
\end{equation}
Similarly, we have:
\[
    K_{(n)(n-2,1,1)} = K_{(n-2,1,1)(n-2,1,1)} =K_{(n-2,2)(n-2,1,1)} =1,
\]
and:
\[
    K_{(n-1,1)(n-2,1,1)} = 2.
\]
This shows that each of the representations $S^{(n)}$, $S^{(n-2,1,1)}$ and $S^{(n-2,2)}$ appears exactly once within $M^{(n-2,1,1)}$ while $S^{(n-1,1)}$ appears twice. Since:
\[
    \operatorname{dim} M^{(n-2,1,1)} = n(n-1) =d_{(n)} + d_{(n-2,1,1)} + d_{(n-2,2)} + 2d_{(n-1,1)}
\]
we get that:
\[
    M^{(n-2,1,1)} \cong S^{(n)} \oplus S^{(n-2,1,1)} \oplus S^{(n-2,2)} \oplus S^{(n-1,1)} \oplus S^{(n-1,1)},
\]
giving:
\begin{equation}\label{eq:character-decompostion-n-2-1-1}
    \chi_{M^{(n-2,1,1)}} = \chi_{(n)} + \chi_{(n-2,1,1)} + \chi_{(n-2,2)} + 2\chi_{(n-1,1)}.
\end{equation}

\end{example}

\end{document}